\def\twon #1{\left\|#1\right\|_2}
\def\onen #1{\left\|#1\right\|_1}
\def\frobn #1{\left\|#1\right\|_{\text{F}}}
\def\abs #1{\left|#1\right|}
\def\st{\text{subject to }}
\def\bC{\mathbb{C}}
\def\bR{\mathbb{R}}
\def\m #1{\boldsymbol{#1}}
\def\cD{\mathcal{D}}
\def\cE{\mathcal{E}}
\def \qed {\hfill \vrule height6pt width 6pt depth 0pt}
\def\bee{\begin{equation}}
\def\ene{\end{equation}}
\def\beq{\begin{eqnarray}}
\def\enq{\end{eqnarray}}
\def\lentwo{\setlength\arraycolsep{2pt}}
\newtheorem{lem}{Lemma}
\newtheorem{rem}{Remark}
\newtheorem{thm}{Theorem}
\newtheorem{exa}{Example}
\def\equ #1{\begin{equation}#1\end{equation}}
\def\equa #1{\begin{eqnarray}#1\end{eqnarray}}
\def\sbra #1{\left(#1\right)}
\def\mbra #1{\left[#1\right]}
\def\lbra #1{\left\{#1\right\}}
\def\diag #1{\text{diag}#1}
\def\tr #1{\text{tr}#1}
\def\rank #1{\text{rank}#1}
\def\st {\text{ subject to }}
\title{A Discretization-Free Sparse and Parametric Approach for Linear Array Signal Processing}
\author{Zai Yang, Lihua Xie, {\em Fellow, IEEE}, and Cishen Zhang
\thanks{Manuscript submitted July 2013; accepted July 2014. The copyright will be transferred to the IEEE.

Z. Yang and L. Xie are with the School of Electrical and Electronic Engineering, Nanyang Technological University, 639798, Singapore (e-mail: \{yangzai, elhxie\}@ntu.edu.sg).

C. Zhang is with the Faculty of Engineering and Industrial Sciences, Swinburne University of Technology, Hawthorn VIC 3122, Australia (e-mail: cishenzhang@swin.edu.au).}}
\begin{document}
\maketitle

\begin{abstract}
Direction of arrival (DOA) estimation in array processing using uniform/sparse linear arrays is concerned in this paper. While sparse methods via {\em approximate} parameter discretization have been popular in the past decade, the discretization may cause problems, e.g., modeling error and increased computations due to dense sampling. In this paper, an {\em exact} discretization-free method, named as sparse and parametric approach (SPA), is proposed for uniform and sparse linear arrays. SPA carries out parameter estimation in the continuous range based on well-established covariance fitting criteria and convex optimization. It guarantees to produce a sparse parameter estimate without discretization required by existing sparse methods. Theoretical analysis shows that the SPA parameter estimator is a large-snapshot realization of the maximum likelihood estimator and is statistically consistent (in the number of snapshots) under uncorrelated sources. Other merits of SPA include improved resolution, applicability to arbitrary number of snapshots, robustness to correlation of the sources and no requirement of user-parameters. Numerical simulations are carried out to verify our analysis and demonstrate advantages of SPA compared to existing methods.
\end{abstract}

\begin{IEEEkeywords}
Array processing, DOA estimation, sparse and parametric approach (SPA), continuous parameter estimation, compressed sensing.
\end{IEEEkeywords}

\section{Introduction}

Farfield narrowband source localization based on observed snapshots of a sensor array is a major problem in array signal processing \cite{krim1996two,stoica2005spectral}, known also as direction of arrival (DOA) estimation. The difficulty of the problem arises from the fact that the observed snapshots are nonlinear functions of the directions of interest. According to estimation schemes adopted, existing methods for the DOA estimation can be classified into three categories: parametric, nonparametric and semiparametric, which are described as follows.

{\em Parametric} methods explicitly carry out parameter estimation using optimization or other methods. A prominent example is the nonlinear least squares (NLS) method (see, e.g., \cite{stoica2005spectral}), which adopts the least squares criterion and has a strong statistical motivation. However, NLS admits the following two shortcomings: 1) it requires the knowledge of the source number which is typically unavailable, and 2) its global optimum cannot be guaranteed with a practically efficient algorithm due to nonconvexity caused by the nonlinearity nature. Either of them may greatly degrade the parameter estimation performance. MUSIC \cite{schmidt1981signal,stoica1989music} does the parameter estimation by studying the subspace of the data covariance matrix and is a large-snapshot realization of the maximum likelihood (ML) method in the case of uncorrelated sources. But it also requires the source number and does not have reliable performance when the number of snapshots is small or correlation exists between the sources. Other subspace-based methods include ESPRIT and their variants (see \cite{stoica2005spectral} for a complete review). In contrast, a {\em nonparametric} method typically produces a {\em dense} spectrum whose peaks are interpreted as source directions. So an alternative name {\em dense} is also used. Examples include conventional beamformer and MVDR (or Capon's method) (see e.g., \cite{stoica2005spectral}) and recently introduced iterative adaptive approach (IAA) \cite{yardibi2010source} which eliminates to a large extent leakage of the beamformer and is robust to correlation of the sources. But it will be shown in this paper that IAA suffers from resolution limit, especially in the moderate/low SNR regime. While the classification above is consistent with \cite{stoica2005spectral}, MUSIC is sometimes categorized in the literature as a nonparametric/dense method by interpreting the plot of its objective function as its power spectrum whose peaks are the optima of the directions.



{\em Semiparametric} methods have been popular in the past decade which remove the direction variables in the observation model by parameter discretization and transform the nonlinear parameter estimation problem into a sparse signal recovery problem under a linear model, followed by some sparse signal recovery technique and support detection of the sparse solution. In particular, the continuous direction range is approximated by a set of discrete grid points under the assumption that the grid is fine enough such that any of the true sources lies on (practically, close to) some grid point. After that, the knowledge is exploited that each of the expanded vector of source signals [composed of (virtual) source signals from candidate directions on the grid] is a sparse signal since the grid size is greatly larger than the source number, and a sparse solution is sought after. Finally, the directions are retrieved from the support of the sparse solution. Consequently, the semiparametric methods are also named as {\em sparse} methods in contrast to {\em dense}. In principle, a source will be detected once the estimated source signal/source power from some direction is nonzero. Following from the literature of sparse signal representation (SSR) and later developed compressed sensing (CS) \cite{chen1998atomic,candes2006compressive,donoho2006compressed}, $\ell_1$ norm minimization and other sparse signal recovery techniques have been widely used in the semiparametric methods (see, e.g., \cite{malioutov2005sparse,hyder2010direction,tang2011aliasing,wei2012doa,yang2013off,gurbuz2012bearing,hu2013doa, abeida2013iterative,duarte2013spectral,huang2012adaptive,carlin2013directions}). However, their theoretical support based on SSR and CS cannot be applied due to dense discretization. Moreover, the sparse recovery techniques usually require one or more practically unknown parameters, e.g., the noise statistics, the source number or regularization parameters, etc. The semiparametric iterative covariance-based estimation (SPICE) method \cite{stoica2011new,stoica2011spice,stoica2012spice} is a breakthrough of the semiparametric methods, in which covariance fitting criteria are adopted with sound statistical motivation, the source power and noise variance(s) are estimated in a natural manner, no user-parameters are required and a connection to the $\ell_1$ norm minimization is shown. In principle, the semiparametric methods are approximation methods due to the discretization scheme adopted. Furthermore, grid selection remains a major problem since 1) a coarse grid leads to a high modeling error and 2) too dense a grid is computationally prohibitive and might result in computational instability (see, e.g., \cite{austin2010relation,stoica2012sparse}). To alleviate the drawbacks of the discretization, preliminary results have been obtained in \cite{yang2012robustly,yang2013off,hu2012compressed, abeida2013iterative, duarte2013spectral,huang2012adaptive}. Overall, because of the inherent discretization scheme, the performance of semiparametric methods is dependent on the trade off between the discretization grid size and the computational workload.

This paper aims at developing discretization-free techniques for DOA estimation with an affordable computational workload in a common scenario of linear arrays, more specifically, uniform linear arrays (ULAs) and sparse linear arrays (SLAs). We consider stochastic source signals and the same covariance fitting criteria as in SPICE. By exploiting the Hermitian Toeplitz structure in the data covariance matrix, the covariance fitting problem is cast as semidefinite programming (SDP) and solved using off-the-shelf SDP solvers, e.g., SDPT3 \cite{toh1999sdpt3}, in a polynomial time. A postprocessing technique is presented in this paper to retrieve, from the data covariance estimate, the parameters of interest including source locations, source powers and noise variance(s). The proposed method is a {\em sparse} method because it utilizes the same covariance fitting criteria of the semiparametric SPICE method and guarantees to produce a sparse parameter estimate. At the same time, it is a {\em parametric} method since it is proven to be equivalently solving a covariance fitting problem parameterized by the aforementioned parameters. Therefore, the proposed method is named as sparse and parametric approach (SPA). The SPA method differs from the existing sparse/semiparametric methods by no need of discretization. Unlike the existing parametric methods, SPA is based on convex optimization and does not require the source number. Theoretical analysis shows that SPA is a large-snapshot realization of the ML estimation and is statistically consistent (in the number of snapshots) under uncorrelated sources. Other merits of SPA include improved resolution, applicability to arbitrary number of snapshots, robustness to correlation of the sources and no requirement of user-parameters. Numerical simulations are carried out to verify our theoretical results and demonstrate the superior performance of SPA compared to existing methods.

Though most, if not all, of the existing sparse methods can be applied to sensor arrays with arbitrary geometry, problems may occur when applied to the ULA and SLA cases as studied in this paper. In particular, their parameter estimates might suffer from some identifiability problem and not be truly sparse as demonstrated in this paper, which conflicts with the name {\em sparse} and decreases their resolution considerably. As a byproduct, a modified SPICE method, named as SPICE-PP, is presented to rectify the problem by incorporating the postprocessing technique presented in this paper.

In the single-snapshot case, the parameter estimation problem studied in this paper is mathematically equivalent to spectral analysis \cite{stoica2005spectral}, for which discretization-free methods have been developed recently in \cite{candes2013towards,bhaskar2013atomic,tang2012compressed}. It is noted that the results of \cite{candes2013towards,bhaskar2013atomic,tang2012compressed} are based on very different techniques. Furthermore, \cite{candes2013towards} and \cite{tang2012compressed} are mainly focused on the noise-free case (the former on the ULA case and the latter on the SLA case in the language of this paper). \cite{bhaskar2013atomic} studies the ULA case in the presence of i.i.d.~Gaussian noise with known noise statistics. The SPA method proposed in this paper can deal with all of the above scenarios without the knowledge of noise statistics. While this paper is focused on the array processing applications in which the multisnapshot case is of the main interest, it is still not clear whether the techniques in \cite{candes2013towards,bhaskar2013atomic,tang2012compressed} specialized for the single-snapshot case can be extended to other cases.\footnote{After submission of this paper, we have investigated connections between the proposed statistical inference method and the deterministic atomic norm technique in \cite{candes2013towards,bhaskar2013atomic,tang2012compressed} and also extended the latter to the multisnapshot case in \cite{yang2014gridless,yang2014continuous,yang2014exact}.}

Notations used in this paper are as follows. $\bR$, $\bR_+$ and $\bC$ denote the sets of real numbers, nonnegative real numbers and complex numbers, respectively. Boldface letters are reserved for vectors and matrices. For an integer $N$, $[N]$ is defined as the set $\lbra{1,\dots,N}$. $\abs{\cdot}$ denotes the absolute value of a scalar or cardinality of a set. $\onen{\cdot}$, $\twon{\cdot}$ and $\frobn{\cdot}$ denote the
$\ell_1$, $\ell_2$ and Frobenius norms, respectively. $\m{x}^T$, $\m{x}^H$ and $\overline{\m{x}}$ are the matrix transpose, conjugate transpose and complex conjugate of $\m{x}$, respectively. $x_j$ is the $j$th entry of a vector $\m{x}$ and $\m{A}_{jk}$ is the $jk$th entry of a matrix $\m{A}$. Unless otherwise stated, $\m{x}_T$ is a subvector of $\m{x}$ with the index set $T$. For a vector $\m{x}$, $\diag\sbra{\m{x}}$ is a diagonal matrix with $\m{x}$ being its diagonal. $\m{x}\succeq\m{0}$ means $x_j\geq0$ for all $j$. $\tr\sbra{\m{A}}$ denotes the trace of a matrix $\m{A}$. For positive semidefinite matrices $\m{A}$ and $\m{B}$, $\m{A}\geq\m{B}$ means that $\m{A}-\m{B}$ is positive semidefinite. $E\mbra{\cdot}$ denotes expectation and $\widehat{\theta}$ is an estimator of $\theta$. For notational simplicity, a random variable and its numerical value will not be distinguished.

The rest of this paper is organized as follows. Section \ref{sec:problem} describes the problem of array processing with linear arrays. Section \ref{sec:SPA_ULA} presents the proposed SPA method in the ULA case followed by Section \ref{sec:SPA_SLA} on the SLA case. Section \ref{sec:property} introduces theoretical properties of SPA. Section \ref{sec:connectionPrior} discusses its connections to existing methods. Section \ref{sec:simulation} presents our numerical simulations. Section \ref{sec:conclusion} concludes this paper.

\section{Problem Description} \label{sec:problem}
\subsection{Observation Model}
Consider $K$ narrowband farfield sources $s_k$, $k\in\mbra{K}$, impinging on a linear array of omnidirectional sensors from directions $d_k\in\left[-90^{\circ},90^{\circ}\right)$, $k\in\mbra{K}$. We are interested in estimation of the direction vector $\m{d}=\mbra{d_1,\dots,d_K}^T$, known as the direction of arrival (DOA) estimation problem \cite{krim1996two,stoica2005spectral}. Denote $\theta_k=\frac{\sin\sbra{d_k}+1}{2}\in\left[0,1\right)$, $k\in\mbra{K}$. Then we can equivalently estimate $\m{\theta}=\mbra{\theta_1,\dots,\theta_K}^T$ since the relation $\m{d}\leftrightarrow\m{\theta}$ is one-to-one. $\m{\theta}$ is called the frequency parameter and its meaning will be clarified later. In this paper, we concern only with the estimation of $\m{\theta}$ for convenience. According to \cite{krim1996two,stoica2005spectral}, the time delays at different sensors can be represented by simple phase shifts, leading to the observation model:
\equ{\m{y}(t) = \sum_{k=1}^K\m{a}\sbra{\theta_k}s_k(t) +\m{e}(t)= \m{A}(\m{\theta})\m{s}(t)+\m{e}(t),\quad t\in\mbra{N},\label{formu:observation_model1}}
where $t$ indexes the snapshot and $N$ is the snapshot number, $\m{y}(t)\in\bC^M$, $\m{s}(t)\in\bC^K$ and $\m{e}(t)\in\bC^M$ denote the observed snapshot, the vector of source signals and the vector of measurement noise at snapshot $t$, respectively, and $M$ is the number of sensors (some notations will be redefined in the SLA case). $\m{A}(\m{\theta})=\mbra{\m{a}\sbra{\theta_1},\dots,\m{a}\sbra{\theta_K}}$ is the so-called array manifold matrix and $\m{a}\sbra{\theta_k}$ is the steering vector of the $k$th source which is determined by the geometry of the sensor array and will be given later. More compactly, (\ref{formu:observation_model1}) can be written into
\equ{\m{Y} = \m{A}(\m{\theta})\m{S}+\m{E},\label{formu:observation_model}}
where $\m{Y}=\mbra{\m{y}(1),\dots,\m{y}(N)}$, and $\m{S}$ and $\m{E}$ are similarly defined.
The number of sources $K$ is assumed unknown in this paper. So, the objective of the array processing is to estimate the unknown parameter $\m{\theta}$ (or equivalently, the vector of DOAs $\m{d}$) given the sensor measurements $\m{Y}$ and the mapping $\m{\theta}\rightarrow \m{A}(\m{\theta})$.

\subsection{Some Standard Assumptions} \label{sec:assmptions}
We introduce some standard assumptions for the problem formulation and solution. $\m{e}(t)$, $t\in\mbra{N}$, are assumed to be spatially and temporarily white, i.e.,
\equ{E\mbra{\m{e}(t_1)\m{e}^H(t_2)}=\diag\sbra{\m{\sigma}}\delta_{t_1,t_2},}
where $\m{\sigma}=\mbra{\sigma_1,\dots,\sigma_M}^T\in\bR_+^M$ is the noise variance parameter and $\delta_{t_1,t_2}$ is a delta function that equals $1$ if $t_1=t_2$ or $0$ otherwise. The source signals and the noise are assumed to be uncorrelated with each other. Moreover, assume that the source signals are uncorrelated spatially and temporarily, i.e.,
\equ{E\mbra{\m{s}(t_1)\m{s}^H(t_2)}=\diag\sbra{\m{p}}\delta_{t_1,t_2},}
where $\m{p}=\mbra{p_1,\dots,p_M}^T\in\bR_+^M$ denotes the source power parameter. Under the assumptions above, the data snapshots $\lbra{\m{y}(1),\dots,\m{y}(N)}$ are uncorrelated with each other and have the covariance matrix
\equ{\m{R}=E\mbra{\m{y}(t)\m{y}^H(t)}=\m{A}\sbra{\m{\theta}}\diag\sbra{\m{p}} \m{A}^H\sbra{\m{\theta}} + \diag\sbra{\m{\sigma}}. \label{formu:ycovariance}}
It is worthy noting that the spatial uncorrelatedness of the sources may not be satisfied in practice, i.e., two sources $\lbra{s_{k_1}(t)}$ and $\lbra{s_{k_2}(t)}$ can be correlated or even coherent (i.e., completely correlated). However, the method proposed in this paper will be shown to be robust to this assumption.

\subsection{Uniform and Sparse Linear Arrays} \label{sec:lineararray}
Both uniform and sparse linear arrays will be studied in this paper. In the ULA case, the sensors are uniformly spaced with a spacing of $\frac{\lambda}{2}$, where $\lambda$ denotes the wavelength
of the sources. For an $M$-element ULA, the steering vector $\m{a}\sbra{\theta_k}$ of source $k$ has the following form with $i=\sqrt{-1}$:
\equ{\m{a}\sbra{\theta_k}=\mbra{1, e^{i2\pi \theta_k},\dots, e^{i2\sbra{M-1}\pi \theta_k}}^T.}
It is clear that $\theta_k$ is the frequency of the uniformly sampled complex sinusoid $\m{a}\sbra{\theta_k}$, a reason for why $\m{\theta}$ is called the frequency parameter. A well-known result is that up to $M-1$ sources can be detected using an $M$-element ULA (see, e.g., \cite{ottersten1998covariance}).

An SLA can be considered as a ULA with ``missing'' sensors, i.e., an SLA takes only a subset, say $\m{\Omega}$, of the sensors of a ULA. Thus an SLA can be represented by its sensor index set $\m{\Omega}\subset\mbra{M}$. Without loss of generality, we assume that $\m{\Omega}$ is sorted ascendingly with $\Omega_1=1$ and $\Omega_L= M$, where $L=\abs{\m{\Omega}}$ denotes the array size (otherwise, we can redefine $\m{\Omega}$, say $\overline{\m{\Omega}}$, such that $\overline{\Omega}_j=\Omega_j-\Omega_1+1$, $j\in[L]$, and let $M=\overline{\Omega}_L$). Then the steering vector of the SLA $\m{\Omega}$ for source $k$, denoted by $\m{a}_{\m{\Omega}}\sbra{\theta_k}$, is
\equ{\m{a}_{\m{\Omega}}\sbra{\theta_k}=\mbra{e^{i2\pi\sbra{\Omega_1-1} \theta_k},\dots, e^{i2\pi\sbra{\Omega_{L}-1}\theta_k}}^T.}
Denote $\m{\Gamma}_{\m{\Omega}}\in\lbra{0,1}^{L\times M}$ a selection matrix such that the $j$th row of $\m{\Gamma}_{\m{\Omega}}$ contains all $0$s but a single $1$ at the $\Omega_j$th position. It is clear that
\equ{\m{a}_{\m{\Omega}}\sbra{\theta_k}=\m{\Gamma}_{\m{\Omega}}\m{a}\sbra{\theta_k}. \label{formu:steervector_SLA}}
Let
\equ{\m{\cD}=\lbra{m_1-m_2+1: m_1,m_2\in\m{\Omega}, m_1\geq m_2}\subset\mbra{M}. \label{formu:coarray}}
An SLA defined by $\m{\cD}$ is called the coarray of $\m{\Omega}$. $\m{\Omega}$ is called a redundancy array if $\m{\cD}$ defines a ULA, i.e., $\m{\cD}=\mbra{M}$. It is obvious that a ULA is a redundancy array. The maximum number of sources detectable using the array $\m{\Omega}$ is determined by its coarray $\m{\cD}$ \cite{linebarger1993difference,ottersten1998covariance}. In particular, a redundancy array can detect up to $M-1$ sources like a ULA. For a non-redundancy SLA, the maximum number of sources detectable will be less than $M-1$. It is noted that redundancy SLAs are quite common and there generally exists such an array satisfying that $L^2\leq 3\sbra{M-1}$ according to \cite{linebarger1993difference}.

\begin{exa} The SLA $\m{\Omega}=\lbra{1,2,5,7}$ is a redundancy array, where $L=4$ and $M=7$. So maximally $6$ sources can be detected using this $4$-element array.
\end{exa}

\section{SPA with a ULA} \label{sec:SPA_ULA}
\subsection{Covariance Fitting Criteria}
Since a ULA can detect up to $M-1$ sources, the condition $K\leq M-1$ can be considered as {\em a priori} knowledge while the exact value of $K$ is unavailable. The covariance of the data snapshots $\m{Y}$ is given in (\ref{formu:ycovariance}) under the assumptions specified in Subsection \ref{sec:assmptions}. Denote by $\widetilde{\m{R}}=\frac{1}{N}\m{Y}\m{Y}^H$ the sample covariance. When $\widetilde{\m{R}}$ and $\m{R}$ are both invertible we consider the following covariance fitting criterion for
the purpose of parameter estimation (see \cite{stoica2011new,stoica2011spice,stoica2012spice,ottersten1998covariance} and the references therein):
\equ{f_1\sbra{\m{\theta},\m{p},\m{\sigma}}= \frobn{\m{R}^{-\frac{1}{2}}\sbra{\widetilde{\m{R}}-\m{R}}\widetilde{\m{R}}^{-\frac{1}{2}}}^2. \label{formu:criterion1}}
$\m{R}^{-1}$ exists in the presence of noise, i.e., $\sigma_j>0$ for $j\in\mbra{M}$. Then $\widetilde{\m{R}}^{-1}$ exists with probability one if $N\geq M$. According to \cite{ottersten1998covariance,stoica1989reparametrization}, the minimization of the criterion in (\ref{formu:criterion1}) is a large-snapshot realization of the ML estimator. In the case where $\widetilde{\m{R}}$ is singular (it happens when $N<M$) we consider an alternative criterion:
\equ{f_2\sbra{\m{\theta},\m{p},\m{\sigma}} =\frobn{\m{R}^{-\frac{1}{2}}\sbra{\widetilde{\m{R}}-\m{R}}}^2 \label{formu:criterion2}}
which has been studied in \cite{stoica2011new,stoica2011spice}. We defer the latter case to Subsection \ref{sec:TlessM}.

\begin{rem} The covariance fitting criteria above exploit the assumption that the sources are uncorrelated which results in the expression of $\m{R}$ in (\ref{formu:ycovariance}). However, a theoretical explanation is provided in \cite{stoica2011new,stoica2011spice} to show that the criteria are robust to correlations of the sources. Moreover, according to \cite{stoica2011spice,rojas2013note} they are connected to the $\ell_1$ norm minimization which is known to be robust to the correlations. Consequently, such robustness property is maintained in the method proposed in this paper which utilizes the same criteria.
\end{rem}

A simple calculation shows that
\equ{\begin{split}f_1
&=\tr\mbra{\m{R}^{-\frac{1}{2}}\sbra{\widetilde{\m{R}}-\m{R}} \widetilde{\m{R}}^{-1} \sbra{\widetilde{\m{R}}-\m{R}}\m{R}^{-\frac{1}{2}}}\\
&=\tr\mbra{\sbra{\m{R}^{-1}\widetilde{\m{R}}-\m{I}} \sbra{\m{I}-\widetilde{\m{R}}^{-1}\m{R}}}\\
&=\tr\sbra{\m{R}^{-1}\widetilde{\m{R}}}+ \tr\sbra{\widetilde{\m{R}}^{-1}\m{R}}-2M. \label{formu:f1_2}
\end{split}}
It is challenging to minimize $f_1$ with respect to the unknown parameters $\m{\theta}$, $\m{p}$ and $\m{\sigma}$ due to their nonlinear relation to $\m{R}$ by (\ref{formu:ycovariance}). We propose to estimate $\m{R}$ firstly by reparameterization and then determine the parameters of interest. Let \equ{\m{C}\sbra{\m{\theta},\m{p}}=\m{A}\sbra{\m{\theta}}\diag\sbra{\m{p}}\m{A}^H\sbra{\m{\theta}}. \label{formu:C} }
It is easy to see that $\m{C}\geq\m{0}$ and $\rank\sbra{\m{C}}=K\leq M-1$. Moreover,
\equ{C_{jl}=\sum_{k=1}^K p_ka_j\sbra{\theta_k}\overline{a}_l\sbra{\theta_k}=\sum_{k=1}^K p_ke^{i2\pi\sbra{j-l}\theta_k}.}
$\m{C}$ is thus a (Hermitian) Toeplitz matrix which is determined by $M$ complex numbers and can be written as $\m{C}=T\sbra{\m{u}}$ for some $\m{u}\in\bC^M$, where
\equ{T\sbra{\m{u}}=\begin{bmatrix}u_1 & u_2 & \cdots & u_M\\ \overline{u}_2 & u_1 & \cdots & u_{M-1}\\ \vdots & \vdots & \ddots & \vdots \\ \overline{u}_M & \overline{u}_{M-1} & \cdots & u_1\end{bmatrix}. \label{formu:Toeplitz}}

\subsection{SDP Formulations in the Case of Different $\lbra{\sigma_j}$}
It follows from (\ref{formu:Toeplitz}) that the characterization \equ{\m{R}\sbra{\m{u},\m{\sigma}}=T\sbra{\m{u}}+\diag\sbra{\m{\sigma}}\label{formu:R_in_u}}
captures the structure of $\m{R}$ under the constraints $T\sbra{\m{u}}\geq\m{0}$ and $\m{\sigma}\succeq\m{0}$. Further observations reveal that $\m{R}$ is inherently determined by $2M-1$ numbers ($M$ for the diagonal and the other $M-1$ for the off diagonal). However, $2M$ numbers ($M$ in $\m{u}$ and the other $M$ in $\m{\sigma}$) have been used in the above expression of $\m{R}$ ($M+1$ for the diagonal and $M-1$ for the off diagonal). As a result, redundancy exists along the diagonal of $\m{R}$. The effect of the redundancy is twofold. On one hand, it will cause an identifiability problem, which  will be tackled in Subsection \ref{sec:postproc}, that in general $\m{u}$, $\m{\sigma}$ cannot be uniquely identified from $\m{R}$ without accounting for additional information except $T\sbra{\m{u}}\geq\m{0}$ and $\m{\sigma}\succeq\m{0}$. It means that the solution $\m{\sigma}^*$ of the SDPs to be presented cannot be directly used as the final estimate of the noise variance. On the other hand, the adoption of one redundant variable enables us not to impose the nonconvex rank-deficiency constraint on $T\sbra{\m{u}}$ to characterize $\m{R}$ (it will be clarified in Subsection \ref{sec:postproc}). However, the redundancy problem will not affect the estimation of $\m{R}$ which is our current focus.

\subsubsection{The Case of $N\geq M$}
With the characterization of $\m{R}$ in (\ref{formu:R_in_u}) the minimization of $f_1$ is equivalent to
\equ{\begin{split}
&\min_{\m{u},\lbra{\m{\sigma}\succeq\m{0}}} \tr\sbra{\m{R}^{-1}\widetilde{\m{R}}}+ \tr\sbra{\widetilde{\m{R}}^{-1}\m{R}}, \\
&\st T\sbra{\m{u}}\geq\m{0}.\end{split} \label{formu:problem_Tlarge}}
Then we can show the following equivalences:
\equ{\begin{split}
\text{(\ref{formu:problem_Tlarge})}
\Leftrightarrow&\min_{\m{u},\lbra{\m{\sigma}\succeq\m{0}}} \tr\sbra{\widetilde{\m{R}}^{\frac{1}{2}}\m{R}^{-1}\widetilde{\m{R}}^{\frac{1}{2}}}+ \tr\sbra{\widetilde{\m{R}}^{-1}\m{R}}, \\
&\st T\sbra{\m{u}}\geq\m{0}\\
\Leftrightarrow& \min_{\m{X},\m{u},\lbra{\m{\sigma}\succeq\m{0}}} \tr\sbra{\m{X}}+ \tr\sbra{\widetilde{\m{R}}^{-1}\m{R}}, \\
&\st T\sbra{\m{u}}\geq\m{0} \text{ and } \m{X}\geq \widetilde{\m{R}}^{\frac{1}{2}}\m{R}^{-1}\widetilde{\m{R}}^{\frac{1}{2}}\\
\Leftrightarrow& \min_{\m{X},\m{u},\lbra{\m{\sigma}\succeq\m{0}}} \tr\sbra{\m{X}}+ \tr\sbra{\widetilde{\m{R}}^{-1}\m{R}}, \\
&\st \begin{bmatrix}\m{X}& \widetilde{\m{R}}^{\frac{1}{2}} & \\ \widetilde{\m{R}}^{\frac{1}{2}} & \m{R} & \\ & & T\sbra{\m{u}} \end{bmatrix}\geq\m{0}.\end{split} \label{formu:SDP_Nlarge}}
So the problem in (\ref{formu:problem_Tlarge}) can be formulated as an SDP and thus is convex. As the result, $\m{R}$ can be estimated by solving the SDP with its estimate given by $\widehat{\m{R}}=T\sbra{\m{u}^*}+\diag\sbra{\m{\sigma}^*}$, where $\sbra{\m{u}^*,\m{\sigma}^*}$ is the solution of the SDP.

\subsubsection{The Case of $N<M$} \label{sec:TlessM}
When $N<M$, $\widetilde{\m{R}}$ ceases to be nonsingular and the criterion in (\ref{formu:criterion2}) is used which can be written into
\equ{\begin{split}f_2\sbra{\m{u},\m{\sigma}}
&=\tr\mbra{\sbra{\widetilde{\m{R}}-\m{R}}\m{R}^{-1}\sbra{\widetilde{\m{R}}-\m{R}}}\\
&=\tr\sbra{\widetilde{\m{R}}\m{R}^{-1}\widetilde{\m{R}}}+\tr\sbra{\m{R}}-2\tr\sbra{\widetilde{\m{R}}}
\end{split} \label{formu:f2_2}}
Hence, an SDP similar to (\ref{formu:SDP_Nlarge}) can be formulated as follows:
\equ{\begin{split}
&\min_{\m{X},\m{u},\lbra{\m{\sigma}\succeq\m{0}}} \tr\sbra{\m{X}}+ \tr\sbra{\m{R}}, \\
&\st \begin{bmatrix}\m{X}& \widetilde{\m{R}} & \\ \widetilde{\m{R}} & \m{R} & \\ & & T\sbra{\m{u}} \end{bmatrix}\geq\m{0}.\end{split} \label{formu:SDP_Nsmall}}


\subsection{SDP Formulations in the Case of Equal $\lbra{\sigma_j}$}
It is reasonable to assume that the noise variances $\sigma_j$, $j\in\mbra{M}$, are equal in some scenarios. Though the formulations in the last subsection can be applied to such a case by imposing the constraint $\sigma_1=\dots=\sigma_M=\sigma$, simpler formulations in fact exist. In such a case, $\diag\sbra{\m{\sigma}}=\sigma\m{I}$ with $\sigma\in\bR_+$. Therefore, the covariance $\m{R}$ has a Toeplitz structure and the characterization
\equ{\m{R}=T\sbra{\m{u}} \label{formu:R_equalvar}}
captures its structure without redundancy under the constraint $T\sbra{\m{u}}\geq\m{0}$ for some $\m{u}\in\bC^M$. Note that the $T\sbra{\m{u}}$ here is the same as that in (\ref{formu:R_in_u}) off the diagonal but different on the diagonal, i.e., the two $\m{u}$'s are the same except the first entry. Following from the similar procedures as in the last subsection, similar SDPs can be formulated. In particular, when $N\geq M$ we have
\equ{\begin{split}
&\min_{\m{X},\m{u}} \tr\sbra{\m{X}}+ \tr\sbra{\widetilde{\m{R}}^{-1}T\sbra{\m{u}}}, \\
&\st \begin{bmatrix}\m{X}& \widetilde{\m{R}}^{\frac{1}{2}} \\ \widetilde{\m{R}}^{\frac{1}{2}} & T\sbra{\m{u}} \end{bmatrix}\geq\m{0}.\end{split} \label{formu:SDP_Nlarge_identical}}
When $N<M$,
\equ{\begin{split}
&\min_{\m{X},\m{u}} \tr\sbra{\m{X}}+ \tr\sbra{T\sbra{\m{u}}}, \\
&\st \begin{bmatrix}\m{X}& \widetilde{\m{R}} \\ \widetilde{\m{R}} & T\sbra{\m{u}} \end{bmatrix}\geq\m{0}.\end{split} \label{formu:SDP_Nsmall_identical}}
Note that the constraint $T\sbra{\m{u}}\geq\m{0}$ has been implicitly included in the constraints in (\ref{formu:SDP_Nlarge_identical}) and (\ref{formu:SDP_Nsmall_identical}). The simplified SDP formulations will lead to faster computations in practice. By solving one of the SDPs, $\widehat{\m{R}}$ is obtained as $\widehat{\m{R}}=T\sbra{\m{u}^*}$ given the solution $\m{u}^*$. For notational convenience, we say $\m{\sigma}^*=\m{0}$ in this case. In fact, note that (\ref{formu:R_equalvar}), (\ref{formu:SDP_Nlarge_identical}) and (\ref{formu:SDP_Nsmall_identical}) can be obtained from (\ref{formu:R_in_u}), (\ref{formu:SDP_Nlarge}) and (\ref{formu:SDP_Nsmall}), respectively, by setting $\m{\sigma}=\m{0}$.

\subsection{Postprocessing of $\widehat{\m{R}}$} \label{sec:postproc}
After obtaining $\widehat{\m{R}}$, the following task is to estimate the parameters $\m{\theta}$, $\m{p}$ and $\m{\sigma}$ by writing it back into the form of (\ref{formu:ycovariance}). To do this, we decompose $\widehat{\m{R}}$ into
\equ{\widehat{\m{R}}=T\sbra{\widehat{\m{u}}}+\diag\sbra{\widehat{\m{\sigma}}}, \label{formu:postproc}}
where $T\sbra{\widehat{\m{u}}}=\m{A}\sbra{\widehat{\m{\theta}}}\diag\sbra{\widehat{\m{p}}} \m{A}^H\sbra{\widehat{\m{\theta}}}\geq\m{0}$ is the estimate of $\m{C}\sbra{\m{\theta},\m{p}}$ in (\ref{formu:C}) and $\widehat{\m{\sigma}}\succeq\m{0}$ is the noise covariance estimate. Such $\sbra{\widehat{\m{u}},\widehat{\m{\sigma}}}$ always exists according to the way $\widehat{\m{R}}$ is given, however, it is generally not unique. In particular, for any $\delta\geq-\min\sbra{\m{\sigma}^*}$ satisfying that $T\sbra{\m{u}^*}-\delta\m{I}\geq\m{0}$, $\sbra{\widehat{\m{u}},\widehat{\m{\sigma}}}= \sbra{\m{u}^*-\begin{bmatrix}\delta\\\m{0}\end{bmatrix}, \m{\sigma}^*+\delta\m{1}}$ leads to one realization of the decomposition, which in fact also enumerates all possible realizations. We utilize the prior knowledge that $K\leq M-1$ to make the decomposition unique as follows. It follows from $K\leq M-1$ that $\rank\sbra{\m{C}}=K\leq M-1$. Therefore, it is natural to impose that $\rank\sbra{T\sbra{\widehat{\m{u}}}}\leq M-1$, i.e., $T\sbra{\widehat{\m{u}}}=T\sbra{\m{u}^*}-\delta\m{I}$ is rank-deficient. A direct result is that $\delta$ is an eigenvalue of $T\sbra{\m{u}^*}$. Then by $T\sbra{\m{u}^*}-\delta\m{I}\geq\m{0}$ we see that
\equ{\delta=\lambda_{\text{min}}\sbra{T\sbra{\m{u}^*}}}
and thus the decomposition is unique, where $\lambda_{\text{min}}\sbra{\cdot}$ denotes the minimum eigenvalue.\footnote{The postprocessing as in (\ref{formu:postproc}) can be done if we have only $\widehat{\m{R}}$ rather than $\m{u}^*$ and $\m{\sigma}^*$. To see this, let $\widetilde{\m{u}}$ be the transpose of the first row of $\widehat{\m{R}}$. Then $\widehat{\m{u}}=\widetilde{\m{u}}-\begin{bmatrix}\lambda_{\text{min}}\sbra{T\sbra{\widetilde{\m{u}}}}\\\m{0}\end{bmatrix}$ and $\widehat{\m{\sigma}}$ is obtained as the diagonal of $\widehat{\m{R}}$ minus $\widehat{u}_1\m{1}$.}

\begin{rem} The postprocessing is used to separate sources and noise in the estimated covariance matrix such that the source part can be represented by as few sources as possible based on the minimum description length principle \cite{grunwald2007minimum} (see the next subsection for clarity). In fact, the idea of postprocessing has been studied in the literature for a special case where $\widehat{\m{R}}$ is a Toeplitz matrix as in the case of equal $\lbra{\sigma_j}$ (see, e.g., \cite[Section 4.9.2]{stoica2005spectral}). It is noted that the postprocessing is very important in SPA, without which the final parameter estimate is generally not unique and does not have the statistical properties that will be shown in Section \ref{sec:property}. Even worse, the frequency estimate of SPA can cease to be sparse. To see this, suppose that $T\sbra{\m{u}^*}$ has full rank (this is generally the case in the presence of noise). Choose $\theta_1\in\left[0,1\right)$ arbitrarily and let $p_1<\sbra{\m{a}^H\sbra{\theta_1} T^{-1}\sbra{\m{u}^*}\m{a}\sbra{\theta_1}}^{-1}$. It follows that the residue $T\sbra{\m{u}^*}- p_1\m{a}\sbra{\theta_1}\m{a}^H\sbra{\theta_1}>\m{0}$ which still has the Toeplitz structure. Then we can choose $\theta_2$ and $p_2$ similarly based on the residue. The process can be repeated infinitely many times and results in infinitely long vectors $\m{\theta}$ and $\m{p}$. \label{rem:postproc}
\end{rem}

\subsection{Frequency and Power Solutions} \label{sec:freqpowest}
The remaining task is to retrieve the frequency estimate $\widehat{\m{\theta}}$ and the power estimate $\widehat{\m{p}}$ given $T\sbra{\widehat{\m{u}}}$, which is based on the following classical Vandermonde decomposition lemma for positive semidefinite Toeplitz matrices (see, e.g., \cite{grenander1958toeplitz,stoica2005spectral}).
\begin{lem} Any positive semidefinite Toeplitz matrix $T\sbra{\m{u}}\in\bC^{M\times M}$ can be represented as
\equ{T\sbra{\m{u}}=\m{V}\m{P}\m{V}^H, \label{formu:toeprep}}
where
{\lentwo\equa{\m{V}
&=&\mbra{\m{a}\sbra{\theta_1},\dots,\m{a}\sbra{\theta_r}}, \\ \m{P}
&=&\diag\sbra{p_1,\dots,p_r},
}}$\theta_j\in\left[0,1\right)$, $p_j>0$ for $j\in[r]$, and $r=\rank\sbra{T\sbra{\m{u}}}$. Moreover, the representation is unique up to permutation of elements of $\m{\theta}$ and $\m{p}$ if $r\leq M-1$. $\qed$ \label{lem:toeplitz}
\end{lem}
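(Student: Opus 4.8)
The plan is to prove existence of the representation~\eqref{formu:toeprep} by induction on the matrix size $M$ (equivalently, by peeling off rank-one Vandermonde terms), and then to prove uniqueness under the assumption $r\leq M-1$ by a dimension-counting/polynomial argument. For existence, first I would dispose of the trivial case $T\sbra{\m{u}}=\m{0}$ (take $r=0$). Otherwise, the key observation is that the shift structure of a Toeplitz matrix lets one ``annihilate'' it: if $T\sbra{\m{u}}\geq\m{0}$ has rank $r<M$, consider its null space, pick a nonzero vector $\m{v}$ in it, and associate to $\m{v}$ the polynomial $v(z)=\sum_{j=1}^M v_j z^{j-1}$. Using positive semidefiniteness together with the Toeplitz (banded-shift) structure, one shows that the roots of $v(z)$ can be taken on the unit circle and that $T\sbra{\m{u}}$ is supported on the ``Vandermonde cone'' generated by steering vectors $\m{a}(\theta)$ with $e^{i2\pi\theta}$ among those roots; choosing any such root $e^{i2\pi\theta_1}$, a suitable scalar $p_1>0$ makes $T\sbra{\m{u}}-p_1\m{a}\sbra{\theta_1}\m{a}^H\sbra{\theta_1}$ positive semidefinite, still Toeplitz, and of strictly smaller rank. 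Iterating $r$ times yields the decomposition with exactly $r$ distinct nodes (distinctness is forced once we collect repeated nodes and add their powers). When $r=M$ the same works after noting $T\sbra{\m{u}}>\m{0}$ forces the nodes to be $M$ distinct points, but the peeling must be seeded differently (e.g.\ perturb, or use the classical fact that a positive definite Toeplitz matrix is the covariance of an order-$(M-1)$ process); this boundary case is the first mild obstacle.

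For uniqueness when $r\leq M-1$, suppose $T\sbra{\m{u}}=\m{V}\m{P}\m{V}^H=\m{V}'\m{P}'\m{V}'^H$ with node sets $\{\theta_j\}$, $\{\theta_k'\}$ of sizes $r,r'\leq M-1$ and strictly positive weights. The range of $T\sbra{\m{u}}$ is then simultaneously $\mathrm{span}\{\m{a}(\theta_j)\}$ and $\mathrm{span}\{\m{a}(\theta_k')\}$, so in particular $r=r'=\rank T\sbra{\m{u}}$ and the two sets of steering vectors span the same $r$-dimensional subspace. Because any $r+1\leq M$ distinct steering vectors $\m{a}(\theta)$ are linearly independent (Vandermonde), a steering vector $\m{a}(\theta_k')$ lying in $\mathrm{span}\{\m{a}(\theta_j)\}_{j=1}^r$ must coincide with one of the $\m{a}(\theta_j)$ — otherwise we would have $r+1$ distinct, hence independent, steering vectors inside an $r$-dimensional space. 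Hence the node sets are equal; matching coefficients (say, by comparing the first row, which gives a nonsingular $r\times r$ Vandermonde system in the $p_j$'s) forces $\m{P}=\m{P}'$ up to the permutation identifying the nodes. The crux of the whole argument — and the step I expect to be the main obstacle — is the existence half: rigorously justifying that a null vector of a positive semidefinite Toeplitz matrix has all its roots on the unit circle and that this pins the matrix to the Vandermonde cone. A clean way around it is to invoke the caratheodory--Fejer / Pisarenko characterization directly: a positive semidefinite Toeplitz matrix of rank $r<M$ has a unique Caratheodory representation with $r$ nodes, which is exactly~\eqref{formu:toeprep}; then only the $r=M$ existence statement needs the separate elementary argument above. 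Since the lemma is classical, in the write-up I would cite \cite{grenander1958toeplitz} (or the relevant section of \cite{stoica2005spectral}) for existence and give the short linear-independence argument above for uniqueness, rather than reproving Caratheodory's theorem from scratch.
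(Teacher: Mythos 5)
The paper does not actually prove Lemma~\ref{lem:toeplitz}: it invokes it as the classical Carath\'eodory/Vandermonde decomposition and simply cites \cite{grenander1958toeplitz,stoica2005spectral}. Your final plan -- cite the classical references for existence and supply only the uniqueness argument -- therefore matches the paper's treatment in spirit, and your uniqueness argument is correct and goes beyond what the paper writes down: since $r'=r=\rank\sbra{T\sbra{\m{u}}}$ and any collection of at most $M$ distinct steering vectors $\m{a}(\theta)$ is linearly independent (Vandermonde), the two node sets must coincide when $r\leq M-1$, and then $\m{V}\sbra{\m{P}-\m{P}'}\m{V}^H=\m{0}$ with $\m{V}$ of full column rank forces $\m{P}=\m{P}'$ (this is cleaner than solving a Vandermonde system from the first row, though that works too). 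Two cautions on the existence sketch you rightly flag as the hard part: the claim that a null vector of a singular positive semidefinite Toeplitz matrix has all its roots on the unit circle and that these roots carry the decomposition is precisely the content of Carath\'eodory--Fej\'er, so attempting it ``by peeling'' without that input would leave a genuine gap; and in the full-rank case $r=M$ the decomposition with $M$ nodes exists but is far from unique -- indeed Remark~\ref{rem:postproc} of the paper exploits exactly this non-uniqueness to generate infinitely many decompositions -- so any seeding of the peeling there must be presented as one choice among many, while uniqueness is (correctly) not claimed. With the citation standing in for existence, your write-up is sound.
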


It follows from Lemma \ref{lem:toeplitz} that $\widehat{\m{\theta}}$ and $\widehat{\m{p}}$ can be uniquely determined given $T\sbra{\widehat{\m{u}}}$ since $\rank\sbra{T\sbra{\widehat{\m{u}}}}\leq M-1$. In practice, $\widehat{\m{\theta}}$ and $\widehat{\m{p}}$ can be obtained as follows. Given $T\sbra{\widehat{\m{u}}}=\m{A}\sbra{\widehat{\m{\theta}}}\diag\sbra{\widehat{\m{p}}} \m{A}^H\sbra{\widehat{\m{\theta}}}$, it is easy to show that
\equ{\begin{bmatrix}\m{A}\sbra{\widehat{\m{\theta}}}\\ \overline{\m{A}}_{\lbra{2,\dots,M}}\sbra{\widehat{\m{\theta}}}\end{bmatrix}\widehat{\m{p}}= \begin{bmatrix}\overline{\widehat{\m{u}}} \\ \widehat{\m{u}}_{\lbra{2,\dots,M}}\end{bmatrix} \label{formu:linsys}}
since $\widehat{\m{p}}\succ\m{0}$, where $\overline{\m{A}}_{\lbra{2,\dots,M}}\sbra{\widehat{\m{\theta}}}$ takes all but the first rows of the matrix $\overline{\m{A}}\sbra{\widehat{\m{\theta}}}$ which denotes the complex conjugate of  $\m{A}\sbra{\widehat{\m{\theta}}}$. So we build a system of $2M-1$ equations that is linear in $\widehat{\m{p}}$ whose length is maximally $M-1$, where each column of the coefficient matrix corresponds to a uniformly sampled sinusoid (after permutation of rows). According to \cite{blu2008sparse}, Prony's method can be applied to this type of systems to efficiently solve $\widehat{\m{\theta}}$ and $\widehat{\m{p}}$. In particular, $\widehat{\m{\theta}}$ is firstly obtained from zeros of a polynomial which is obtained by solving a linear system involving only $\widehat{\m{u}}$. After that, $\widehat{\m{p}}$ is solved from (\ref{formu:linsys}). Readers are referred to \cite{blu2008sparse} for the detailed procedure.

The proposed SPA algorithm for the array processing with a ULA is presented in Algorithm \ref{alg:SPA}. The covariance matrix $\m{R}$ is firstly estimated by solving an SDP. Then the postprocessing procedure is carried out for $\widehat{\m{R}}$ to resolve an identifiability problem that exists in the solution of the SDP (the noise variance is estimated at this step). Finally, the frequency and the power are solved using Prony's method. To carry out the parameter estimation, SPA requires only the data snapshots $\m{Y}$ of the sensor array without any other user parameters. More properties of SPA will be presented in Section \ref{sec:property}.

\begin{algorithm}
\caption{Sparse and parametric approach (SPA)}
Input: observed snapshots $\m{Y}$.
\begin{enumerate}
 \item Estimate $\m{R}$ by solving the solution $\sbra{\m{u}^*,\m{\sigma}^*}$ of an SDP;
 \item Postprocess $\widehat{\m{R}}$ to obtain $\widehat{\m{u}}$ and $\widehat{\m{\sigma}}$ via $T\sbra{\widehat{\m{u}}}=T\sbra{\m{u}^*}-\lambda_{\text{min}}\sbra{T\sbra{\m{u}^*}}\m{I}$ and $\widehat{\m{\sigma}} =\m{\sigma}^*+\lambda_{\text{min}}\sbra{T\sbra{\m{u}^*}}\m{1}$;
 \item Solve $\widehat{\m{\theta}}$ and $\widehat{\m{p}}$ from (\ref{formu:linsys}) using Prony's method.
\end{enumerate}
Output: parameter estimator $\sbra{\widehat{\m{\theta}}, \widehat{\m{p}}, \widehat{\m{\sigma}}}$.
\label{alg:SPA}
\end{algorithm}

\begin{rem}
In general, SPA cannot correctly determine the true number of sources $K$ but provides additionally spurious sources. This is because that we do not assume in SPA any knowledge of the source number or the noise variance(s). Consequently, SPA solves the problem as in the worst case where up to $M-1$ sources can be present. In fact, the source number estimation problem itself is very difficult which is known as model order selection \cite{stoica2004model}, and lacking guarantees on the estimated source number seems to be a common feature of sparse methods, especially when no prior knowledge mentioned above is available (see, e.g., \cite{malioutov2005sparse,stoica2011spice,stoica2012spice,carlin2013directions}). A positive side of SPA is that it has strong statistical properties which will be shown in Section \ref{sec:property}, and theoretically the powers of the spurious sources can be very close to zero under sufficient snapshots or appropriately low SNR, which enables the spurious sources to be distinguished from the real ones. Currently, methods have been proposed to remove spurious sources by modifying the solution of a sparse method. For example, intuitive thresholding is used in \cite{carlin2013directions} and information criteria-based methods are presented in \cite{yardibi2010source,austin2010relation}. In future studies, we may seek to incorporate model order selection in SPA such that the source number can be estimated automatically together with the parameters but it is beyond the scope of this paper.
\end{rem}

\section{SPA with an SLA} \label{sec:SPA_SLA}



In this section we extend SPA to the SLA case. With respect to an SLA $\m{\Omega}$, denote the steering matrix by $\m{A}_{\m{\Omega}}\sbra{\m{\theta}}=\mbra{\m{a}_{\m{\Omega}}\sbra{\theta_1}, \dots, \m{a}_{\m{\Omega}}\sbra{\theta_K}}\in\bC^{L\times K}$, the data snapshots by $\m{Y}_{\m{\Omega}}=\mbra{\m{y}_{\m{\Omega}}\sbra{1}, \dots, \m{y}_{\m{\Omega}}\sbra{N}}=\m{A}_{\m{\Omega}}\sbra{\m{\theta}}\m{S}+\m{E}_{\m{\Omega}}\in\bC^{L\times N}$, the covariance matrix by $\m{R}_{\m{\Omega}}=E\lbra{\m{y}_{\m{\Omega}}(t)\m{y}^H_{\m{\Omega}}(t)}=\m{A}_{\m{\Omega}}\sbra{\m{\theta}} \diag\sbra{\m{p}}\m{A}^H_{\m{\Omega}}\sbra{\m{\theta}} +\diag\sbra{\m{\sigma}_{\m{\Omega}}}\in\bC^{L\times L}$ and the sample covariance by $\widetilde{\m{R}}_{\m{\Omega}}=\frac{1}{N}\m{Y}_{\m{\Omega}}\m{Y}^H_{\m{\Omega}}\in\bC^{L\times L}$, where $\m{\sigma}_{\m{\Omega}}\in\bR_+^L$ denotes the noise variance parameter. It follows from (\ref{formu:steervector_SLA}) that $\m{A}_{\m{\Omega}}\sbra{\m{\theta}}=\m{\Gamma}_{\m{\Omega}}\m{A}\sbra{\m{\theta}}$ and then $\m{A}_{\m{\Omega}}\sbra{\m{\theta}}\diag\sbra{\m{p}} \m{A}_{\m{\Omega}}^H\sbra{\m{\theta}} =\m{\Gamma}_{\m{\Omega}}T\sbra{\m{u}}\m{\Gamma}_{\m{\Omega}}^T\triangleq T_{\m{\Omega}}\sbra{\m{u}} $, where the Toeplitz matrix $T\sbra{\m{u}}=\m{A}\sbra{\m{\theta}}\diag\sbra{\m{p}}\m{A}^H\sbra{\m{\theta}}$ for some $\m{u}\in\bC^M$. A careful study of $T_{\m{\Omega}}\sbra{\m{u}}$ reveals that
\equ{T_{\m{\Omega}}\sbra{\m{u}}_{jl}=\sum_{k=1}^K p_ka_{\Omega_j}\sbra{\theta_k}\overline{a}_{\Omega_l}\sbra{\theta_k}=\sum_{k=1}^K p_ke^{i2\pi\sbra{\Omega_j-\Omega_l}\theta_k}, \label{formu:entry_SLA}}
i.e., the entries of $T_{\m{\Omega}}\sbra{\m{u}}$ are specified by the coarray $\m{\cD}$ defined in (\ref{formu:coarray}).

In this paper we are mainly interested in the case where the SLA $\m{\Omega}$ is a redundancy array. As in the ULA case, $K\leq M-1$ is considered as {\em a priori} knowledge according to Subsection \ref{sec:lineararray}. The matrix $T_{\m{\Omega}}\sbra{\m{u}}$ contains all elements of $\m{u}$ explicitly by (\ref{formu:entry_SLA}). It implies that the relation $T\sbra{\m{u}}\leftrightarrow T_{\m{\Omega}}\sbra{\m{u}}$ is one-to-one.


\begin{exa} Given a redundancy SLA $\m{\Omega}=\lbra{1,2,5,7}$, we have $\m{u}\in\bC^7$ and $T_{\m{\Omega}}\sbra{\m{u}}=\m{\Gamma}_{\m{\Omega}}T\sbra{\m{u}}\m{\Gamma}_{\m{\Omega}}^T=\begin{bmatrix}u_1 & u_2 & u_5 & u_7\\ \overline{u}_2 & u_1 & u_4 & u_6 \\ \overline{u}_5 & \overline{u}_4 & u_1 & u_3 \\ \overline{u}_7 & \overline{u}_6 & \overline{u}_3 & u_1\end{bmatrix}$, where all elements of $\m{u}$ are contained.
\end{exa}

It follows that the covariance matrix $\m{R}_{\m{\Omega}}$ can be characterized as
\equ{\begin{split}\m{R}_{\m{\Omega}}
&=T_{\m{\Omega}}\sbra{\m{u}} +\diag\sbra{\m{\sigma}_{\m{\Omega}}}\\
&=\m{\Gamma}_{\m{\Omega}}T\sbra{\m{u}}\m{\Gamma}_{\m{\Omega}}^T +\diag\sbra{\m{\sigma}_{\m{\Omega}}}\end{split}}
under the constraints $T\sbra{\m{u}}\geq\m{0}$ and $\m{\sigma}_{\m{\Omega}}\succeq\m{0}$. Consider similar covariance fitting criteria as in the ULA case, i.e., the minimization of $\frobn{\m{R}_{\m{\Omega}}^{-\frac{1}{2}} \sbra{\widetilde{\m{R}}_{\m{\Omega}}-\m{R}_{\m{\Omega}}}\widetilde{\m{R}}_{\m{\Omega}}^{-\frac{1}{2}}}^2$ when $N\geq L$ and $\frobn{\m{R}_{\m{\Omega}}^{-\frac{1}{2}} \sbra{\widetilde{\m{R}}_{\m{\Omega}}-\m{R}_{\m{\Omega}}}}^2$ otherwise. Then SPA can be extended to the SLA case. In particular, when $N\geq L$ we obtain the following SDP:
\equ{\begin{split}&\min_{\m{X},\m{u},\lbra{\m{\sigma}_{\m{\Omega}}\succeq\m{0}}} \tr\sbra{\m{X}}+ \tr\sbra{\widetilde{\m{R}}_{\m{\Omega}}^{-1}\m{R}_{\m{\Omega}}}, \\
&\st \begin{bmatrix}\m{X}& \widetilde{\m{R}}_{\m{\Omega}}^{\frac{1}{2}} & \\ \widetilde{\m{R}}_{\m{\Omega}}^{\frac{1}{2}} & \m{R}_{\m{\Omega}} & \\ & & T\sbra{\m{u}} \end{bmatrix}\geq\m{0}.\end{split} \label{formu:SDP_Nlarge_miss}}
When $N<L$, it is
\equ{\begin{split}
&\min_{\m{X},\m{u},\lbra{\m{\sigma}_{\m{\Omega}}\succeq\m{0}}} \tr\sbra{\m{X}}+ \tr\sbra{\m{R}_{\m{\Omega}}}, \\
&\st \begin{bmatrix}\m{X}& \widetilde{\m{R}}_{\m{\Omega}} & \\ \widetilde{\m{R}}_{\m{\Omega}} & \m{R}_{\m{\Omega}} & \\ & & T\sbra{\m{u}} \end{bmatrix}\geq\m{0}.\end{split} \label{formu:SDP_Nsmall_miss}}


In the case of equal noise variances, $\m{R}_{\m{\Omega}}$ can be characterized as $\m{R}_{\m{\Omega}}=\m{\Gamma}_{\m{\Omega}}T\sbra{\m{u}}\m{\Gamma}_{\m{\Omega}}^T$. Then similar SDPs can be formulated as in (\ref{formu:SDP_Nlarge_miss}) and (\ref{formu:SDP_Nsmall_miss}) by simply setting $\m{\sigma}_{\m{\Omega}}=\m{0}$. Note that the problem dimension in such a case cannot be further reduced as in the ULA case since $\m{\Gamma}_{\m{\Omega}}T\sbra{\m{u}}\m{\Gamma}_{\m{\Omega}}^T\geq\m{0}$ does not imply $T\sbra{\m{u}}\geq\m{0}$.

Note that the same identifiability problem exists in the solution $\sbra{\m{u}^*,\m{\sigma}_{\m{\Omega}}^*}$ of the SDPs above. To resolve the problem, the same procedures can be carried out as in the ULA case by exploiting that $\rank\sbra{T\sbra{\widehat{\m{u}}}}\leq M-1$. Then the parameter estimate $\sbra{\widehat{\m{\theta}}, \widehat{\m{p}}, \widehat{\m{\sigma}}_{\m{\Omega}}}$ can be obtained.

\begin{rem}
When the SLA $\m{\Omega}$ is not a redundancy array, i.e., the coarray $\m{\cD}$ is not a ULA, SPA presented above can be applied straightforwardly. In such a case, it is natural to require that $K\leq \overline{M}-1<M-1$, where $\overline{M}-1$ denotes the maximum number of sources detectable using the non-redundancy SLA. However, the introduced implementation of SPA can only use the information up to $K\leq M-1$, which should be a waste of knowledge. A thorough study on exploitation of the full information should be investigated in the future but is beyond the scope of this paper.
\end{rem}

\section{Properties of SPA} \label{sec:property}
\subsection{Sparse and Parametric Method}
As its name suggests, SPA is a {\em sparse} and {\em parametric} method. It carries out the optimization by reparameterization and thus is a parametric method. The length of its frequency estimator $\widehat{\m{\theta}}$ is maximally $M-1$. Since this is an important result of this paper, we state it formally in the following theorem.
\begin{thm} The length of the frequency estimator $\widehat{\m{\theta}}$ of SPA is maximally $M-1$. $\qed$
\end{thm}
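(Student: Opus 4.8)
The plan is to trace through the SPA algorithm (Algorithm \ref{alg:SPA}) and show that the frequency estimator $\widehat{\m{\theta}}$ is precisely the support extracted from the Vandermonde decomposition of $T\sbra{\widehat{\m{u}}}$, whose cardinality equals $\rank\sbra{T\sbra{\widehat{\m{u}}}}$ by Lemma \ref{lem:toeplitz}. So the entire claim reduces to verifying that $\rank\sbra{T\sbra{\widehat{\m{u}}}}\leq M-1$ after the postprocessing step.

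First I would recall that the SDP solution yields $\sbra{\m{u}^*,\m{\sigma}^*}$ with $T\sbra{\m{u}^*}\geq\m{0}$ (either explicitly constrained, or implicitly as in (\ref{formu:SDP_Nlarge_identical})--(\ref{formu:SDP_Nsmall_identical})), and $\widehat{\m{R}}=T\sbra{\m{u}^*}+\diag\sbra{\m{\sigma}^*}$ (with $\m{\sigma}^*=\m{0}$ in the equal-variance case). The postprocessing defines $T\sbra{\widehat{\m{u}}}=T\sbra{\m{u}^*}-\lambda_{\text{min}}\sbra{T\sbra{\m{u}^*}}\m{I}$. Then I would observe two things: (i) $T\sbra{\widehat{\m{u}}}\geq\m{0}$, since subtracting the minimum eigenvalue keeps the matrix positive semidefinite; (ii) $\lambda_{\text{min}}\sbra{T\sbra{\widehat{\m{u}}}}=0$, so $T\sbra{\widehat{\m{u}}}$ is singular, i.e., $\rank\sbra{T\sbra{\widehat{\m{u}}}}\leq M-1$. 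This last step is the crux, but it is essentially immediate from the spectral shift: the eigenvalues of $T\sbra{\widehat{\m{u}}}$ are those of $T\sbra{\m{u}^*}$ shifted down by $\lambda_{\text{min}}\sbra{T\sbra{\m{u}^*}}$, so the smallest becomes exactly zero.

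Next I would apply Lemma \ref{lem:toeplitz} to $T\sbra{\widehat{\m{u}}}$: since it is positive semidefinite Toeplitz of rank $r=\rank\sbra{T\sbra{\widehat{\m{u}}}}\leq M-1$, it admits a unique Vandermonde decomposition $T\sbra{\widehat{\m{u}}}=\m{V}\m{P}\m{V}^H$ with $\m{V}=\mbra{\m{a}\sbra{\widehat\theta_1},\dots,\m{a}\sbra{\widehat\theta_r}}$ and $\widehat{p}_j>0$. This is exactly the $\sbra{\widehat{\m{\theta}},\widehat{\m{p}}}$ that Step 3 of Algorithm \ref{alg:SPA} computes via Prony's method from (\ref{formu:linsys}), so the length of $\widehat{\m{\theta}}$ is $r\leq M-1$. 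In the SLA case, the postprocessing is performed on the same Toeplitz object $T\sbra{\m{u}^*}$ (recovered one-to-one from $T_{\m{\Omega}}\sbra{\m{u}^*}$ since $\m{\Omega}$ is a redundancy array), so the identical argument applies.

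I do not anticipate a serious obstacle; the only point requiring a line of care is confirming that the spectral shift by $\lambda_{\text{min}}$ both preserves positive semidefiniteness and produces a genuine rank drop, i.e., that $\lambda_{\text{min}}\sbra{T\sbra{\m{u}^*}}$ is actually attained as an eigenvalue (trivially true for a Hermitian matrix). One should also note the degenerate possibility that $T\sbra{\m{u}^*}$ is already rank-deficient, in which case $\lambda_{\text{min}}=0$ and nothing is subtracted, but the conclusion $\rank\leq M-1$ still holds. Hence the length of $\widehat{\m{\theta}}$ is at most $M-1$, which completes the proof.
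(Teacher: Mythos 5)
Your proof is correct and follows essentially the same route the paper takes (implicitly, via Subsections \ref{sec:postproc} and \ref{sec:freqpowest}): the postprocessing shift by $\lambda_{\text{min}}\sbra{T\sbra{\m{u}^*}}$ makes $T\sbra{\widehat{\m{u}}}$ a singular positive semidefinite Toeplitz matrix, so $\rank\sbra{T\sbra{\widehat{\m{u}}}}\leq M-1$, and Lemma \ref{lem:toeplitz} then gives a Vandermonde decomposition with exactly $r\leq M-1$ frequencies, which is what Prony's method recovers. Your additional remarks on the SLA case and the degenerate already-rank-deficient case are consistent with the paper's treatment and add no gap.
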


\subsection{Statistical Properties} \label{sec:statproperty}
We consider ULAs and redundancy SLAs in this subsection, or collectively, redundancy arrays. Notice that dimensions of the parameter estimators $\widehat{\m{\theta}}$ and $\widehat{\m{p}}$ may be different from the true dimension $K$. To discuss statistical properties of the SPA estimator, the dimension problem should be resolved first. Notice that the proposed SPA is equivalently assuming $K=M-1$ (the worst case) with the knowledge $K\leq M-1$. Then we expand both the true parameter and its estimator to the same dimension. Define
\lentwo{\equa{\cE_p\sbra{\m{v}}
&=&\begin{bmatrix}\m{v}\\ \m{0}\end{bmatrix}\in\bR^{M-1},\\ \cE_f\sbra{\m{v}}
&=&\begin{bmatrix}\m{v}\\ \m{w}\end{bmatrix} \in\bR^{M-1}}
}for a vector $\m{v}$ of dimension no more than $M-1$, where $\m{w}\in\left[0,1\right)^{M-1-\abs{\m{v}}}$ is arbitrary. It is easy to see that $\sbra{\cE_f\sbra{\m{\theta}}, \cE_p\sbra{\m{p}}}$ and $\sbra{\m{\theta}, \m{p}}$ are physically equivalent since all of the added virtual sources have zero powers. Then we have the following results.

\begin{thm} Assume that $\lbra{\m{s}(t)}$ and $\lbra{\m{e}(t)}$ are uncorrelated, $E\mbra{\m{s}(t_1)\m{s}^H(t_2)}=\diag\sbra{\m{p}}\delta_{t_1,t_2}$ and $E\mbra{\m{e}(t_1)\m{e}^H(t_2)}=\diag\sbra{\m{\sigma}}\delta_{t_1,t_2}$ for $t_1,t_2\in\mbra{N}$. Moreover, assume that the sensor array is a redundancy array and $K\leq M-1$. Then the parameter estimator $\sbra{\widehat{\m{\theta}}, \widehat{\m{p}}, \widehat{\m{\sigma}}}$ of SPA is statistically consistent (in $N$).\footnote{Without ambiguity, $\widehat{\m{\sigma}}$ and $\m{\sigma}$ denote respectively $\widehat{\m{\sigma}}_{\m{\Omega}}$ and $\m{\sigma}_{\m{\Omega}}$ in the SLA case, or $\widehat{\sigma}$ and $\sigma$ in the case of equal noise variances. It is the same for Theorems \ref{thm:AML} and \ref{thm:efficiency}.} \label{thm:consistency}
\end{thm}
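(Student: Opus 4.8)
The plan is to regard Algorithm~\ref{alg:SPA} as a composition of three maps --- covariance estimation by SDP, the postprocessing decomposition, and the Vandermonde/Prony step --- and to check that each is continuous on the relevant domain, so that the almost-sure convergence $\widetilde{\m{R}}\to\m{R}$ (and $\widetilde{\m{R}}_{\m{\Omega}}\to\m{R}_{\m{\Omega}}$ in the SLA case), which follows from the strong law of large numbers since the snapshots are i.i.d.\ with covariance (\ref{formu:ycovariance}), propagates all the way to the parameter estimate. Since we are interested in the limit $N\to\infty$, we may assume $N\geq M$ (resp.\ $N\geq L$), so only $f_1$ and the SDP (\ref{formu:SDP_Nlarge}) (resp.\ (\ref{formu:SDP_Nlarge_miss})) are active; the remaining cases follow by the same chain of continuous maps, with $\m{\sigma}^*=\m{0}$ in the equal-variance case.

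First I would show $\widehat{\m{R}}\to\m{R}$. Because the array is a redundancy array, the true $\m{R}=T\sbra{\m{u}_0}+\diag\sbra{\m{\sigma}}$ with $T\sbra{\m{u}_0}=\m{C}\geq\m{0}$ as in (\ref{formu:C}) is feasible for the SDP (and positive definite in the presence of noise, hence gives a finite objective); so by optimality $f_1\sbra{\widehat{\m{R}};\widetilde{\m{R}}}\leq f_1\sbra{\m{R};\widetilde{\m{R}}}\to f_1\sbra{\m{R};\m{R}}=0$ by continuity of $f_1$ on pairs of positive definite matrices. From (\ref{formu:f1_2}), $f_1\sbra{\m{B};\m{A}}=\tr\sbra{\m{A}^{-1}\m{B}}+\tr\sbra{\m{B}^{-1}\m{A}}-2M=\sum_j\sbra{\sqrt{\lambda_j}-1/\sqrt{\lambda_j}}^2$, where $\lambda_j$ are the eigenvalues of $\m{A}^{-\frac12}\m{B}\m{A}^{-\frac12}$; this is nonnegative, coercive in $\m{B}$, and vanishes only at $\m{B}=\m{A}$, so $f_1\sbra{\widehat{\m{R}};\widetilde{\m{R}}}\to0$ together with $\widetilde{\m{R}}\to\m{R}$ forces $\widehat{\m{R}}\to\m{R}$ (pass to a subsequence; coercivity keeps $\widehat{\m{R}}$ in a compact set of positive definite matrices, and any limit point kills $f_1\sbra{\cdot;\m{R}}$, hence equals $\m{R}$). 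Next, the postprocessing in Algorithm~\ref{alg:SPA} is a continuous function of $\widehat{\m{R}}$, since it only reads off entries of $\widehat{\m{R}}$ and subtracts $\lambda_{\text{min}}\sbra{T\sbra{\cdot}}\m{I}$; applied to the true $\m{R}$, its first row is the first row of $T\sbra{\m{u}_0}$ with $\sigma_1$ added to the leading entry, so $\lambda_{\text{min}}$ of the associated Toeplitz matrix equals $\lambda_{\text{min}}\sbra{\m{C}}+\sigma_1=\sigma_1$ (using $\rank\sbra{\m{C}}=K\leq M-1$, hence $\lambda_{\text{min}}\sbra{\m{C}}=0$), and the output is exactly $T\sbra{\widehat{\m{u}}}=\m{C}$, $\widehat{\m{\sigma}}=\m{\sigma}$. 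By continuity, $T\sbra{\widehat{\m{u}}}\to\m{C}$ and $\widehat{\m{\sigma}}\to\m{\sigma}$ almost surely, which already gives consistency of $\widehat{\m{\sigma}}$. In the SLA case $\m{u}_0$, and hence the postprocessing, is recovered from the off-diagonal entries of $\widehat{\m{R}}_{\m{\Omega}}$ precisely because $\m{\Omega}$ is a redundancy array.

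The hard part is the last step: deducing consistency of $\sbra{\widehat{\m{\theta}},\widehat{\m{p}}}$ from $T\sbra{\widehat{\m{u}}}\to\m{C}$, i.e., showing that the Vandermonde/Prony decomposition behind (\ref{formu:linsys}) is stable even though the limit $\m{C}$ is rank-deficient while $T\sbra{\widehat{\m{u}}}$ is generically of rank $M-1$, so the number of recovered atoms need not match $K$. Here I would argue via compactness and the uniqueness part of Lemma~\ref{lem:toeplitz}, using the expansions $\cE_f,\cE_p$ to reconcile dimensions. Since the first entry of $\widehat{\m{u}}$ converges to $u_{0,1}=\sum_k p_k$, the weights obey $\onen{\widehat{\m{p}}}\to\sum_k p_k$ and are bounded, while the frequencies lie in the compact set $\mbra{0,1}$; hence along any subsequence $\widehat{\m{\theta}}\to\m{\theta}^\star$ and $\widehat{\m{p}}\to\m{p}^\star\succeq\m{0}$. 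Continuity of $\sbra{\m{\theta},\m{p}}\mapsto\m{A}\sbra{\m{\theta}}\diag\sbra{\m{p}}\m{A}^H\sbra{\m{\theta}}$ and $T\sbra{\widehat{\m{u}}}\to\m{C}$ give $\m{A}\sbra{\m{\theta}^\star}\diag\sbra{\m{p}^\star}\m{A}^H\sbra{\m{\theta}^\star}=\m{C}$; discarding the atoms with zero limiting weight and merging coincident frequencies yields a Vandermonde decomposition of $\m{C}$ with at most $M-1$ distinct atoms and positive weights, which by Lemma~\ref{lem:toeplitz} must coincide with the true one. Thus the surviving limiting frequencies and powers are exactly $\sbra{\m{\theta},\m{p}}$ and all spurious powers tend to $0$; since every subsequence has this same limit, the whole sequence converges, which is precisely the statement that $\sbra{\widehat{\m{\theta}},\widehat{\m{p}},\widehat{\m{\sigma}}}$ is consistent. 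The equal-variance and redundancy-SLA cases go through identically, the only change being the $\m{\Gamma}_{\m{\Omega}}$ bookkeeping in the first two steps.
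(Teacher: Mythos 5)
Your proposal is correct and takes essentially the same route as the paper's own proof: almost-sure convergence $\widetilde{\m{R}}\to\m{R}^o$, identification of $\m{R}^o$ as the covariance-fitting minimizer in the limit, and unique recovery of $\sbra{\m{\theta}^o,\m{p}^o,\m{\sigma}^o}$ from $\m{R}^o$ via the postprocessing and the Vandermonde decomposition (Lemma \ref{lem:toeplitz}) under $K\leq M-1$. The paper states this argument only as a brief sketch, while you additionally supply the coercivity/continuity and subsequence details (convergence of the SDP minimizer, stability of the Vandermonde step with spurious powers tending to zero) that the paper leaves implicit.
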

\begin{proof} Without loss of generality, we consider only the ULA case. The proof is based on the observation that the parameter estimator $\sbra{\widehat{\m{\theta}}, \widehat{\m{p}}, \widehat{\m{\sigma}}}$ can be uniquely determined given the covariance estimator $\widehat{\m{R}}$. Suppose that the true parameter value is $\sbra{\m{\theta}^o, \m{p}^o, \m{\sigma}^o}$. As $N\rightarrow\infty$, $\widetilde{\m{R}}$ approaches the true covariance $\m{R}^o=\m{A}\sbra{\m{\theta}^o}\diag\sbra{\m{p}^o}\m{A}^H\sbra{\m{\theta}^o} + \diag\sbra{\m{\sigma}^o}$. Then by (\ref{formu:criterion1}) the SDP of SPA admits a unique minimizer $\m{R}^o$ of $\m{R}$, which determines the unique parameter estimate $\sbra{\m{\theta}^o, \m{p}^o, \m{\sigma}^o}$ given $K\leq M-1$.
\end{proof}

\begin{thm} Assume that $\lbra{\m{s}(t)}$ and $\lbra{\m{e}(t)}$ are uncorrelated and both are i.i.d.~circular Gaussian with means zero and covariance matrices $\diag\sbra{\m{p}}$ and $\diag\sbra{\m{\sigma}}$, respectively. Moreover, assume that the sensor array is a redundancy array and $K\leq M-1$. Then $\sbra{\cE_f\sbra{\widehat{\m{\theta}}}, \cE_p\sbra{\widehat{\m{p}}}, \widehat{\m{\sigma}}}$ is asymptotically an ML estimator of $\sbra{\cE_f\sbra{\m{\theta}}, \cE_p\sbra{\m{p}}, \m{\sigma}}$. \label{thm:AML}
\end{thm}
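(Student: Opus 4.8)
The plan is to establish the asymptotic ML property by connecting the SPA estimator to the minimizer of the exact negative log-likelihood, exploiting the fact established earlier that minimizing the covariance fitting criterion $f_1$ is a large-snapshot realization of the ML estimator. First I would write down the exact (stochastic) ML problem. Under the i.i.d.\ circular Gaussian assumption, the log-likelihood of the data $\m{Y}$ is (up to constants) $-N\mbra{\ln\det\m{R} + \tr\sbra{\m{R}^{-1}\widetilde{\m{R}}}}$, so the ML estimator of $\m{R}$ over the feasible set $\lbra{\m{R}=T\sbra{\m{u}}+\diag\sbra{\m{\sigma}}: T\sbra{\m{u}}\geq\m{0},\ \m{\sigma}\succeq\m{0}}$ minimizes $g\sbra{\m{R}} = \ln\det\m{R} + \tr\sbra{\m{R}^{-1}\widetilde{\m{R}}}$. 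Since $K\le M-1$ is the only structural constraint and the parameter set $\lbra{T\sbra{\widehat{\m{u}}}\geq\m{0},\ \rank T\sbra{\widehat{\m{u}}}\le M-1,\ \m{\sigma}\succeq\m{0}}$ covers the same covariance matrices (this is exactly what the postprocessing of Section~\ref{sec:postproc} and the Vandermonde Lemma~\ref{lem:toeplitz} deliver), the ML estimator of $\sbra{\cE_f\sbra{\m{\theta}},\cE_p\sbra{\m{p}},\m{\sigma}}$ corresponds, via the one-to-one postprocessing map, to the minimizer of $g$ over that covariance set.

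Next I would invoke the classical result already cited in the excerpt (\cite{ottersten1998covariance,stoica1989reparametrization}) that the covariance fitting criterion $f_1$ in \eqref{formu:criterion1} is asymptotically (in $N$) equivalent to the negative Gaussian log-likelihood: both are minimized by values of $\m{R}$ that differ by $o_p(1)$, because $f_1$ is (to second order about $\widetilde{\m{R}}$, after normalization) the Taylor expansion of $g$. Concretely, one shows that the minimizer $\widehat{\m{R}}_{\text{SPA}}$ of $f_1$ over the feasible set and the minimizer $\widehat{\m{R}}_{\text{ML}}$ of $g$ over the same set both converge to the true $\m{R}^o$ as $N\to\infty$ (using Theorem~\ref{thm:consistency}, which gives consistency of the SPA covariance estimate, and the analogous standard consistency of ML), and moreover their difference is $o_p\sbra{N^{-1/2}}$, i.e.\ they have the same first-order asymptotic distribution. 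This is the sense in which ``asymptotically an ML estimator'' is to be understood.

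Finally I would transfer this equivalence through the postprocessing map to the parameter level. Because the decomposition $\widehat{\m{R}} = T\sbra{\widehat{\m{u}}} + \diag\sbra{\widehat{\m{\sigma}}}$ with $\rank T\sbra{\widehat{\m{u}}}\le M-1$ is \emph{unique} (Section~\ref{sec:postproc}) and, given $T\sbra{\widehat{\m{u}}}$, the Vandermonde decomposition into $\sbra{\widehat{\m{\theta}},\widehat{\m{p}}}$ is unique up to permutation (Lemma~\ref{lem:toeplitz}, using $\rank\le M-1$), the map $\widehat{\m{R}}\mapsto\sbra{\cE_f\sbra{\widehat{\m{\theta}}},\cE_p\sbra{\widehat{\m{p}}},\widehat{\m{\sigma}}}$ is well-defined and (locally, near $\m{R}^o$ with $\m{R}^o$ having Vandermonde rank $K<M-1$ when $K<M-1$, or $=M-1$ otherwise) continuous/differentiable. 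Applying this common smooth map to the two asymptotically equivalent covariance estimators $\widehat{\m{R}}_{\text{SPA}}$ and $\widehat{\m{R}}_{\text{ML}}$ yields parameter estimators that are asymptotically equivalent, which is the claim. The main obstacle I anticipate is the regularity of the postprocessing map at $\m{R}^o$: when $K<M-1$, the true $T\sbra{\m{u}^o}$ is rank-deficient and one must check that small perturbations of $\widehat{\m{R}}$ still yield a rank-$\le M-1$ Toeplitz part with a well-behaved Vandermonde factorization, i.e.\ that the minimum-eigenvalue subtraction and Prony step are stable; handling this spurious-source degeneracy (the added zero-power virtual sources in $\cE_f,\cE_p$ make the problem boundary, not interior) is the delicate point, and it is exactly why the statement is phrased via the padded estimators $\cE_f,\cE_p$ rather than claiming full asymptotic efficiency here.
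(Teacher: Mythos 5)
Your proposal is correct and rests on the same two pillars as the paper's own proof: (i) the SPA estimator, by uniqueness of the postprocessing decomposition and of the Vandermonde factorization (Lemma \ref{lem:toeplitz}), is exactly the solution of the covariance fitting problem parameterized by $\sbra{\m{\theta},\m{p},\m{\sigma}}$ with $\m{\theta}\in\left[0,1\right)^{M-1}$, $\m{p}\in\bR_+^{M-1}$, i.e.\ problem (\ref{formu:SPA_nonconvex}) (resp.\ (\ref{formu:SPA_nonconvex_SLA}) for a redundancy SLA); and (ii) the classical results of \cite{stoica1989reparametrization,ottersten1998covariance} give the asymptotic ML property of the minimizer of that criterion. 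The packaging differs in one meaningful way: the paper invokes the extended invariance principle (EXIP) \emph{directly at the parameter level} — the global minimizer of (\ref{formu:SPA_nonconvex}) is a large-snapshot realization of the ML estimator of $\sbra{\cE_f\sbra{\m{\theta}},\cE_p\sbra{\m{p}},\m{\sigma}}$ — so it never needs to push two covariance-level estimators through the postprocessing map. Your route (show the $f_1$-minimizer and the exact Gaussian ML covariance estimate agree to $o_p\sbra{N^{-1/2}}$ over the same structured set, then transfer via the map $\widehat{\m{R}}\mapsto\sbra{\cE_f\sbra{\widehat{\m{\theta}}},\cE_p\sbra{\widehat{\m{p}}},\widehat{\m{\sigma}}}$) is viable but takes on exactly the regularity burden you flag: when $K<M-1$ the true Toeplitz part is rank-deficient, the padded frequencies are unidentifiable, and continuity/differentiability of the minimum-eigenvalue-plus-Prony map near $\m{R}^o$ is not straightforward to establish componentwise; the EXIP formulation absorbs this degeneracy (which is precisely why the theorem is stated in terms of $\cE_f,\cE_p$), so adopting the paper's parameter-level statement of the classical result would let you drop that delicate step entirely.
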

\begin{proof} Consider first the ULA case. Since $\sbra{\widehat{\m{\theta}}, \widehat{\m{p}}, \widehat{\m{\sigma}}}$ can be uniquely determined given $\widehat{\m{R}}$, the SPA estimator is equivalent to solving the following (nonconvex) optimization problem:\footnote{The zero entries, if any, of the solution of $\m{p}$ are removed in SPA as well as the corresponding entries of the solution of $\m{\theta}$.}
\equ{\begin{split}
&\min_{\m{\theta}\in\left[0,1\right)^{M-1},\m{p}\in\bR_+^{M-1},\m{\sigma}\succeq\m{0}} \frobn{\m{R}^{-\frac{1}{2}}\sbra{\widetilde{\m{R}}-\m{R}}\widetilde{\m{R}}^{-\frac{1}{2}}}^2,\\
&\st \m{R}=\m{A}\sbra{\m{\theta}}\diag\sbra{\m{p}}\m{A}^H\sbra{\m{\theta}}+ \diag\sbra{\m{\sigma}}. \end{split} \label{formu:SPA_nonconvex}}
On the other hand, the data snapshots $\lbra{\m{y}(t)}$ are i.i.d. Gaussian with mean zero and covariance $\m{R}$ under the assumptions. According to the extended invariance principle (EXIP) \cite{stoica1989reparametrization} and the derivations in \cite{ottersten1998covariance}, the global minimizer of (\ref{formu:SPA_nonconvex}) is a large-snapshot realization of the ML estimator of $\sbra{\cE_f\sbra{\m{\theta}}, \cE_p\sbra{\m{p}}, \m{\sigma}}$.

Similarly, the SPA estimator in the redundancy SLA case is the global minimizer of the following problem:
\equ{\begin{split}
&\min_{\m{\theta}\in\left[0,1\right)^{M-1},\m{p}\in\bR_+^{M-1},\m{\sigma}_{\m{\Omega}}\succeq\m{0}} \frobn{\m{R}_{\m{\Omega}}^{-\frac{1}{2}} \sbra{\widetilde{\m{R}}_{\m{\Omega}}-\m{R}_{\m{\Omega}}}\widetilde{\m{R}}_{\m{\Omega}}^{-\frac{1}{2}}}^2,\\
&\st \m{R}_{\m{\Omega}}=\m{A}_{\m{\Omega}}\sbra{\m{\theta}}\diag\sbra{\m{p}}\m{A}_{\m{\Omega}}^H\sbra{\m{\theta}}+ \diag\sbra{\m{\sigma}_{\m{\Omega}}}. \end{split} \label{formu:SPA_nonconvex_SLA}}
Then the same result follows.
\end{proof}

Theorem \ref{thm:AML} states that $\sbra{\cE_f\sbra{\widehat{\m{\theta}}}, \cE_p\sbra{\widehat{\m{p}}}, \widehat{\m{\sigma}}}$ is asymptotically an ML estimator of $\sbra{\cE_f\sbra{\m{\theta}}, \cE_p\sbra{\m{p}}, \m{\sigma}}$ under some technical assumptions. Theorem \ref{thm:consistency} shows that this estimator is also consistent. Since, asymptotically, the power estimator $\cE_p\sbra{\widehat{\m{p}}}$ may lie on the boundary of $\bR_+^{M-1}$ and thus the asymptotic normality of $\sbra{\cE_f\sbra{\widehat{\m{\theta}}}, \cE_p\sbra{\widehat{\m{p}}}, \widehat{\m{\sigma}}}$ does not hold directly. However, it indeed holds in the case of $K=M-1$ where the true parameter $\m{p}^o$ is an interior point of $\bR_+^{M-1}$. So we have the following theorem.

\begin{thm} Under the assumptions of Theorem \ref{thm:AML} and $K=M-1$ with $\m{p}\succ\m{0}$ and $\m{\sigma}\succ\m{0}$, the SPA estimator $\sbra{\widehat{\m{\theta}}, \widehat{\m{p}}, \widehat{\m{\sigma}}}$ is asymptotically normal with asymptotic unbiased mean $\sbra{\m{\theta}^o, \m{p}^o, \m{\sigma}^o}$ and covariance $\m{F}^{-1}$, where $\m{F}$ denotes the Fisher information matrix of the parameter. It follows that $\sbra{\widehat{\m{\theta}}, \widehat{\m{p}}, \widehat{\m{\sigma}}}$ is statistically asymptotically efficient. \label{thm:efficiency}
\end{thm}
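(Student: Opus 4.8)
The plan is to recognize the SPA estimator, in the regime $K=M-1$ with $\m{p}^o\succ\m{0}$ and $\m{\sigma}^o\succ\m{0}$, as a covariance-matching (COMET-type) estimator whose true parameter is an interior point, and then to invoke the classical asymptotic-efficiency theory for such estimators. First I would note that when $K=M-1$ the expansion operators $\cE_f,\cE_p$ defined before Theorem~\ref{thm:consistency} reduce to the identity, so that, by the same reasoning used in the proof of Theorem~\ref{thm:AML}, the SPA estimator $\sbra{\widehat{\m{\theta}},\widehat{\m{p}},\widehat{\m{\sigma}}}$ is exactly the global minimizer of the covariance-fitting problem (\ref{formu:SPA_nonconvex}) (and of (\ref{formu:SPA_nonconvex_SLA}) in the redundancy-SLA case, with the obvious substitutions in the equal-variance case). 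Writing $\m{\eta}=\sbra{\m{\theta},\m{p},\m{\sigma}}$, this exhibits $\widehat{\m{\eta}}$ as a weighted fit of the structured covariance $\m{R}\sbra{\m{\eta}}$ to the sample covariance $\widetilde{\m{R}}$ with a data-dependent weighting that converges to the inverse of the asymptotic covariance of $\mathrm{vec}\sbra{\widetilde{\m{R}}-\m{R}^o}$ under the Gaussian model.

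Next I would check the hypotheses of the efficiency result of \cite{stoica1989reparametrization,ottersten1998covariance}. Since $K=M-1$, the true $\m{\theta}^o$ has $M-1$ distinct entries and $\m{p}^o\succ\m{0}$, $\m{\sigma}^o\succ\m{0}$, so $\m{\eta}^o$ lies in the interior of the feasible set $[0,1)^{M-1}\times\bR_+^{M-1}\times\bR_+^{M}$; near $\m{\eta}^o$ the optimization is therefore an unconstrained smooth problem --- this is precisely what fails when $K<M-1$, where $\cE_p\sbra{\m{p}^o}$ sits on the boundary of $\bR_+^{M-1}$ and normality is lost. For identifiability, if two feasible parameters produce the same $\m{R}$, then matching the off-diagonal entries forces the two Toeplitz parts to agree off the diagonal, so $T\sbra{\m{u}_1}=T\sbra{\m{u}_2}-\delta\m{I}$ and $\m{\sigma}_1=\m{\sigma}_2+\delta\m{1}$ for a scalar $\delta$; positive-semidefiniteness together with rank at most $M-1$ gives $\lambda_{\text{min}}\sbra{T\sbra{\m{u}_1}}=\lambda_{\text{min}}\sbra{T\sbra{\m{u}_2}}=0$, hence $\delta=0$, and Lemma~\ref{lem:toeplitz} (applicable since the rank is at most $M-1$) then yields $\sbra{\m{\theta}_1,\m{p}_1}=\sbra{\m{\theta}_2,\m{p}_2}$ up to permutation. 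The same bookkeeping shows $\partial\,\mathrm{vec}\sbra{\m{R}}/\partial\m{\eta}$ has full column rank at $\m{\eta}^o$ (in the SLA case the redundancy of $\m{\Omega}$ ensures $\m{R}_{\m{\Omega}}$ still contains every entry of $\m{u}$, so no information is lost), so the Fisher information $\m{F}$ is nonsingular. Finally, $\m{a}\sbra{\cdot}$ is analytic and $\m{R}^o\succ\m{0}$ because $\m{\sigma}^o\succ\m{0}$, so $\widetilde{\m{R}}\ra\m{R}^o$ almost surely and $\widetilde{\m{R}}^{-1}$ exists for $N$ large when $N\geq M$ (the case $N<M$ uses criterion (\ref{formu:criterion2}) and is handled identically).

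With these in place I would run the usual argument for minimum-distance / M-estimators: consistency is Theorem~\ref{thm:consistency}; a Taylor expansion of the gradient of $f_1$ about $\m{\eta}^o$ gives $\sqrt{N}\sbra{\widehat{\m{\eta}}-\m{\eta}^o}=-\m{H}^{-1}\sqrt{N}\,\nabla_{\m{\eta}}f_1\sbra{\m{\eta}^o}+o_p(1)$, where $\m{H}$ is the deterministic, nonsingular limiting Hessian (nonsingular by the previous step); the central limit theorem applied to the entries of $\widetilde{\m{R}}$ makes $\sqrt{N}\,\nabla_{\m{\eta}}f_1\sbra{\m{\eta}^o}$ asymptotically Gaussian with zero mean, hence so is $\sqrt{N}\sbra{\widehat{\m{\eta}}-\m{\eta}^o}$, which gives asymptotic normality together with asymptotic unbiasedness of the mean $\m{\eta}^o$. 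Because the data are circular Gaussian and the weighting in $f_1$ converges to the inverse covariance of $\mathrm{vec}\sbra{\widetilde{\m{R}}-\m{R}^o}$, the extended invariance principle of \cite{stoica1989reparametrization}, as specialized to covariance matching in \cite{ottersten1998covariance}, identifies the limiting covariance with the Cram\'er--Rao bound, i.e.\ $\widehat{\m{\eta}}$ has asymptotic covariance $\m{F}^{-1}$; this is exactly asymptotic statistical efficiency. The SLA and equal-variance cases follow by substituting $\sbra{\m{R}_{\m{\Omega}},\m{A}_{\m{\Omega}},\widetilde{\m{R}}_{\m{\Omega}}}$ or the scalar $\sigma$ throughout.

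The hard part will be the two points on which everything else rests: showing that $\m{\eta}^o$ is an interior point at which the full $\sbra{\m{\theta},\m{p},\m{\sigma}}$ (some $3M-2$ real parameters for a ULA) is locally identifiable and the Fisher information is nonsingular --- this is where all of the hypotheses $K=M-1$, $\m{p}^o\succ\m{0}$, $\m{\sigma}^o\succ\m{0}$, distinctness of $\m{\theta}^o$ and redundancy of $\m{\Omega}$ are genuinely used; and carrying the weighting matrix through the limit carefully enough to conclude that the asymptotic covariance is the \emph{optimal} $\m{F}^{-1}$ rather than merely a sandwich covariance. For the latter I would rely on the Gaussian assumption and on the EXIP/COMET machinery of \cite{stoica1989reparametrization,ottersten1998covariance} rather than rederive it from scratch.
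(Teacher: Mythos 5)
Your proposal is correct, and it reaches the conclusion along the same conceptual path as the paper, but with far more of the machinery made explicit. The paper's own proof of Theorem~\ref{thm:efficiency} is a single sentence: it combines Theorem~\ref{thm:consistency} (consistency) and Theorem~\ref{thm:AML} (SPA is asymptotically an ML estimator via EXIP and the covariance-matching results of \cite{stoica1989reparametrization,ottersten1998covariance}) with the classical fact that a consistent estimator realizing the ML estimator at an interior true parameter is asymptotically normal and efficient. You instead run the minimum-distance/M-estimator argument directly on the covariance-fitting criterion --- Taylor expansion of the gradient, CLT for the sample covariance, and identification of the limiting covariance with $\m{F}^{-1}$ through the optimal weighting and the EXIP/COMET theory --- which is essentially unpacking the content of the very references the paper leans on, rather than a different decomposition or key lemma. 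What your version buys is that it makes visible the regularity conditions the paper leaves implicit: you verify that $K=M-1$, $\m{p}^o\succ\m{0}$, $\m{\sigma}^o\succ\m{0}$ place $\m{\eta}^o$ in the interior (and you correctly flag that this is exactly what fails for $K<M-1$), you give a clean identifiability argument (two rank-deficient PSD Toeplitz parts differing by $\delta\m{I}$ force $\delta=0$, then Lemma~\ref{lem:toeplitz}), and you point at the full-rank Jacobian/nonsingular Fisher information. The only soft spot is that the Jacobian rank and FIM nonsingularity are asserted (``the same bookkeeping shows'') rather than proved; a complete argument would note, e.g., that $\m{I}$ is not tangent to the rank-$(M-1)$ PSD Toeplitz manifold at $T\sbra{\m{u}^o}$ (since $\m{v}^H\m{I}\m{v}\neq 0$ for the null vector $\m{v}$), so the source and noise directions split the parameter space and the Slepian--Bangs FIM is invertible. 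Since the paper does not supply this either, it is a matter of completeness relative to your own more ambitious route, not a gap relative to the paper's proof.
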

\begin{proof} The result follows from Theorems \ref{thm:consistency} and \ref{thm:AML} and the properties of consistent ML estimators.
\end{proof}

\begin{rem} The asymptotic covariance matrix $\m{F}^{-1}$ is usually referred to as the Crammer-Rao lower bound (CRLB), which can be computed following from \cite{stoica1990performance} and will be omitted in this paper. Alternatively, its numerical results will be presented in Section \ref{sec:simulation}.
\end{rem}


\section{Connection to Prior Arts} \label{sec:connectionPrior}
\subsection{Connection to SPICE} \label{sec:connectionSPICE}
The SPA method proposed in this paper is closely connected to SPICE in \cite{stoica2011spice}. The two methods adopt the same covariance fitting criteria. Roughly speaking, SPICE can be considered as a discretized version of SPA when applied to ULAs and SLAs. However, the following two main differences make their performances very different. One is that SPA is a parametric method while SPICE is a semiparametric method. In particular, the covariance $\m{R}$ is approximated in SPICE by discretizing the continuous range $\left[0,1\right)$ of the frequency. Consider the ULA case as an example. Denote $\widetilde{\m{\theta}}\in\left[0,1\right)^{\widetilde{N}}$ the discretized sampling grid of the frequency and $\widetilde{\m{p}}\in\bR_+^{\widetilde{N}}$ the corresponding power vector, where $\widetilde{N}$ denotes the grid size. Then $\m{R}$ is expressed as
\equ{\m{R}\sbra{\widetilde{\m{p}}, \m{\sigma}}=\m{A}\sbra{\widetilde{\m{\theta}}}\diag\sbra{\widetilde{\m{p}}} \m{A}^H\sbra{\widetilde{\m{\theta}}}+\diag\sbra{\m{\sigma}} \label{formu:Rgrid}}
which is a linear function of $\sbra{\widetilde{\m{p}}, \m{\sigma}}$. With (\ref{formu:Rgrid}), the covariance fitting criterion (\ref{formu:criterion1}) or (\ref{formu:criterion2}) becomes a convex function of $\sbra{\widetilde{\m{p}}, \m{\sigma}}$ and is optimized in \cite{stoica2011new,stoica2011spice} via an iterative algorithm, named as SPICE.
Note that the characterization of $\m{R}$ in (\ref{formu:Rgrid}) is only an approximation since there is no guarantee that the true frequencies lie on the grid $\widetilde{\m{\theta}}$. Thus the approximation error (or modeling error), which depends on the grid density, is one potential reason causing estimation inaccuracy of SPICE. The other difference is that the parameter estimate of SPICE is obtained directly from the solution $\sbra{\widetilde{\m{p}}^*, \m{\sigma}^*}$ of the covariance fitting optimization problem while an additional postprocessing procedure is carried out in SPA. In particular, a source at $\widetilde{\theta}_j$ is detected in SPICE once $\widetilde{p}_j^*>0$ and the frequency estimate is constrained on the grid, which will be referred to as the on-grid issue hereafter and becomes a second potential reason causing inaccuracy of SPICE. Furthermore, the parameter estimate without the postprocessing is generally not unique according to Remark \ref{rem:postproc} due to an identifiability problem which also exists in the solution $\sbra{\widetilde{\m{p}}^*, \m{\sigma}^*}$ of SPICE. This can be a third potential reason causing inaccuracy of SPICE. In fact, the SPICE estimate might not be {\em sparse} since the support of $\widetilde{\m{p}}^*$ can be as large as its dimension $\widetilde{N}$ according to Remark \ref{rem:postproc}, which will be numerically verified in Section \ref{sec:simulation}.

In summary, there are three potential reasons that may cause inaccuracy to the parameter estimation of SPICE, including the modeling error, the on-grid issue and the identifiability problem, rendering that SPICE does not possess the sparse and the statistical properties of SPA presented in Section \ref{sec:property}. The first reason corresponds to the first difference and is introduced by the discretization, which can be alleviated by adopting a dense sampling grid but at the cost of more expensive computations. The last two are related to the second difference. Inspired by SPA, we present SPICE-PP to resolve them using the postprocessing technique in the next subsection.\footnote{Note that the on-grid issue is not necessarily a consequence of the discretization but can be solved using the postprocessing.}

Before proceeding to SPICE-PP, we compare computational costs of SPICE and SPA. We consider only the dominant part in the following comparison, i.e., the computation of $\widehat{\m{R}}^{\frac{1}{2}}$ or $\widehat{\m{R}}_{\m{\Omega}}^{\frac{1}{2}}$ is excluded for both SPICE and SPA which takes $O\sbra{M^2N+M^3}$ or $O\sbra{L^2N+L^3}$ flops, respectively, and the postprocessing and parameter solving are also excluded for SPA which take $O\sbra{M^3}$ flops. SPICE is an iterative algorithm whose computational complexity equals the complexity per iteration times the number of iterations, i.e., $O\sbra{M^3\widetilde{N}T}$ in the ULA case and $O\sbra{L^3\widetilde{N}T}$ in the SLA case, where $T$ denotes the number of iterations which is hard to quantify and empirically observed to vary in different scenarios. For SPA we adopt an off-the-shelf SDP solver, SDPT3 \cite{toh1999sdpt3}, where the interior-point method is implemented to solve the SDP. Denote by $n_1$ and $n_2\times n_2$ the variable size and dimension of the positive semidefinite matrix in the semidefinite constraint of an SDP, respectively. Then the SDP can be solved in $O\sbra{n_1^2n_2^{2.5}}$ flops in the worst case according to \cite{vandenberghe1996semidefinite}. In the ULA case, $n_1$ is on the order of $M^2$ and $n_2$ is proportional to $M$. It follows that the complexity of SPA is $O\sbra{M^{6.5}}$. In the SLA case, $n_1$ and $n_2$ are on the order of $L^2+M$ and $L+M$, respectively. Then the complexity is $O\sbra{L^4M^{2.5}+M^{4.5}}$ since $L\leq M$. Without surprise, the order on $M$ or $L$ for SPA is higher than that for SPICE. But the positive side is that the complexity of SPA does not depend on the grid size $\widetilde{N}$ which is typically much greater than $M$ in array processing. As a result, SPA can be possibly faster than SPICE if a dense sampling grid (large $\widetilde{N}$) is adopted in SPICE for obtaining high accuracy, and vise versa. Note that the order on $M$ or $L$ might be decreased in the future for both SPA and SPICE if there are more sophisticated algorithms but the linear dependence on $\widetilde{N}$ for SPICE probably cannot due to the discretization (similar results hold for other discretization-based methods).

\subsection{Proposed SPICE-PP}
Like SPA, we modify SPICE and obtain the parameter estimate within three steps. Firstly, $\m{R}$ is estimated as
\equ{\widehat{\m{R}}=\m{A}\sbra{\widetilde{\m{\theta}}}\diag\sbra{\widetilde{\m{p}}^*} \m{A}^H\sbra{\widetilde{\m{\theta}}}+\diag\sbra{\m{\sigma}^*}}
given the solution $\sbra{\widetilde{\m{p}}^*, \m{\sigma}^*}$ of the original SPICE. Then the postprocessing is applied to decompose $\widehat{\m{R}}$ into $\widehat{\m{R}}=T\sbra{\widehat{\m{u}}}+\diag\sbra{\widehat{\m{\sigma}}}$ by exploiting $K\leq M-1$. Finally, the parameter estimate is obtained via the Vandermonde decomposition of $T\sbra{\widehat{\m{u}}}$. The modified SPICE algorithm is named as SPICE with the postprocessing, abbreviated as SPICE-PP. Note that the frequency estimate of SPICE-PP is no long constrained on the grid. It is also noted that this postprocessing technique can be possibly applied to other covariance-based methods.

\subsection{Connection to Existing Discretization-Free Methods in the Case of $N=1$}
In the limiting single-snapshot case, i.e., when $N=1$, the array processing problem is mathematically equivalent to spectral analysis \cite{stoica2005spectral}. For the latter topic, discretization-free techniques have been recently proposed in \cite{candes2013towards,bhaskar2013atomic,tang2012compressed} based on atomic norm minimization (see \cite{chandrasekaran2012convex}) which, also called total variation norm in \cite{candes2013towards}, is a continuous version of the $\ell_1$ norm.
It is noted that the SPA method proposed in this paper can also be applied to the single-snapshot case. In fact, the SDPs in (\ref{formu:SDP_Nsmall}) and (\ref{formu:SDP_Nsmall_identical}) can be further simplified in such case. Given (\ref{formu:SDP_Nsmall_identical}) as an example. It follows from $\widetilde{\m{R}}=\m{y}\m{y}^H$ that $\tr\sbra{\widetilde{\m{R}}\m{R}^{-1}\widetilde{\m{R}}}=\twon{\m{y}}^2 \m{y}^H\m{R}^{-1}\m{y}$. So, an alternative formulation of (\ref{formu:SDP_Nsmall_identical}) will be
\equ{\begin{split}
&\min_{x,\m{u}} x+ \tr\sbra{T\sbra{\m{u}}}, \\
&\st \begin{bmatrix}x& \twon{\m{y}}\m{y}^H \\ \twon{\m{y}}\m{y} & T\sbra{\m{u}} \end{bmatrix}\geq\m{0}.\end{split} \label{formu:SDP_N1}}
It is interesting to note that, though obtained from a very different technique, (\ref{formu:SDP_N1}) is quite similar to the SDP formulations in \cite{candes2013towards,bhaskar2013atomic,tang2012compressed}. A detailed investigation of the relation is beyond the scope of this paper and will be posed as a future work since this paper is focused on the array processing applications in which the multisnapshot case is of the main interest.

\section{Numerical Simulations} \label{sec:simulation}
\subsection{Simulation Setups}
In this section we illustrate the performance of the proposed SPA method and compare it with existing methods via numerical simulations. The methods that we consider include SPICE \cite{stoica2011spice}, SPICE-PP, IAA \cite{yardibi2010source}, MUSIC and OGSBI-SVD \cite{yang2013off}. SPICE is a semiparametric method which can be roughly considered as a discretized version of SPA as described in Subsection \ref{sec:connectionSPICE}. SPICE-PP is a modified version of SPICE by incorporating the postprocessing technique presented in this paper. IAA is an enhanced nonparametric method. MUSIC is a classical subspace-based parametric method. OGSBI-SVD is a semiparametric method for off-grid DOA estimation. The information of source number, $K$, is required in MUSIC but not in the other methods. Since SPA and SPICE operate in different manners in the cases of equal/different noise variances, `+' is used to indicate the case when the equal noise variances assumption is imposed. For example, SPA+ refers to SPA with the assumption. Without ambiguity, ``SPA'' can refer to either the collectively called SPA technique or the SPA method with different noise variances (in contrast to SPA+) hereafter (similarly for the use of ``SPICE''). Some setups of the algorithms above are as follows. The SDPs of SPA are implemented using CVX with the SDPT3 solver \cite{grant2008cvx,toh1999sdpt3}. SPICE is implemented as in \cite{stoica2011spice} and terminated when the relative change of the objective function value in two consecutive iterations falls below $1\times 10^{-6}$ or the maximum number of iterations, set to 500, is reached. IAA is terminated if the relative change of the $\ell_2$ norm of the power vector in two consecutive iterations falls below $1\times 10^{-6}$ or the maximum number of iterations, set to 500, is reached. OGSBI-SVD is implemented as in \cite{yang2013off} except that the source number is unknown (see details in \cite{yang2013analysis}).

\subsection{Spectra Comparisons} \label{sec:spectra}

\begin{figure*}
\centering
  \subfigure[$\sbra{N, \text{SNR}}=\sbra{200,20\text{dB}}$]{
    \label{Fig:spectra1}
    \includegraphics[width=3.15in]{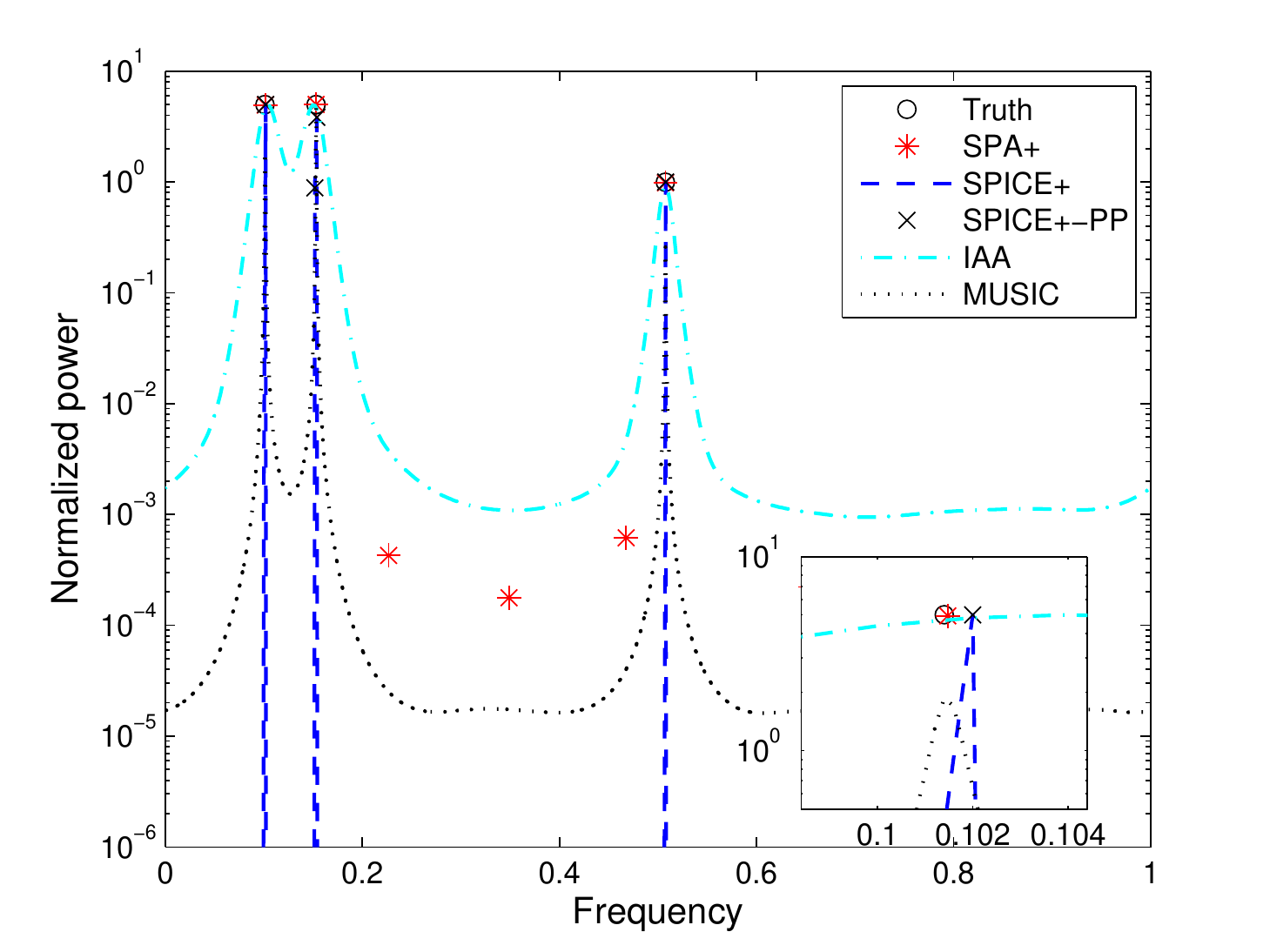}} %
  \subfigure[$\sbra{N, \text{SNR}}=\sbra{200,0\text{dB}}$]{
    \label{Fig:spectra2}
    \includegraphics[width=3.15in]{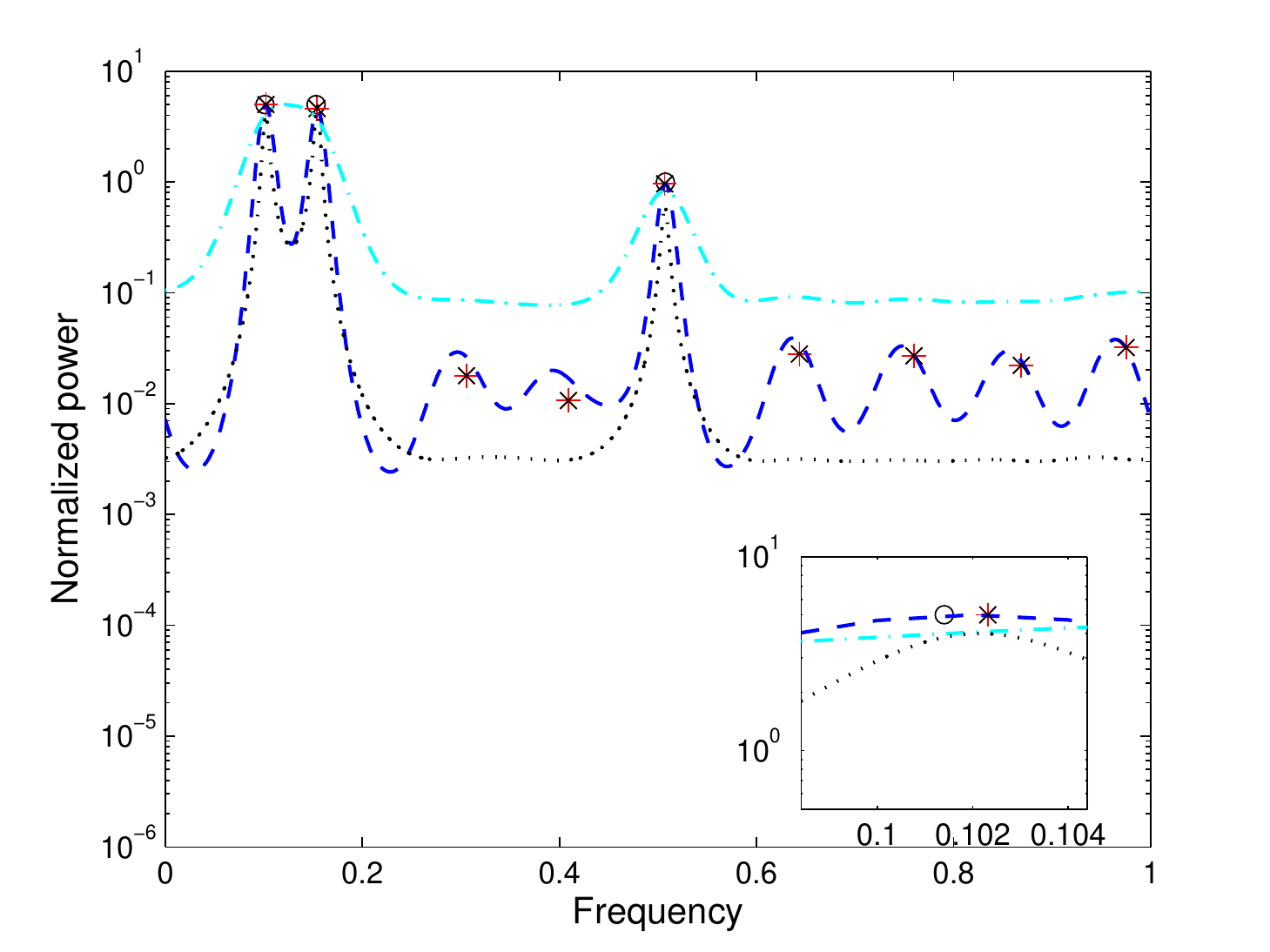}}
  \subfigure[$\sbra{N, \text{SNR}}=\sbra{+\infty,-20\text{dB}}$]{
    \label{Fig:spectra3}
    \includegraphics[width=3.15in]{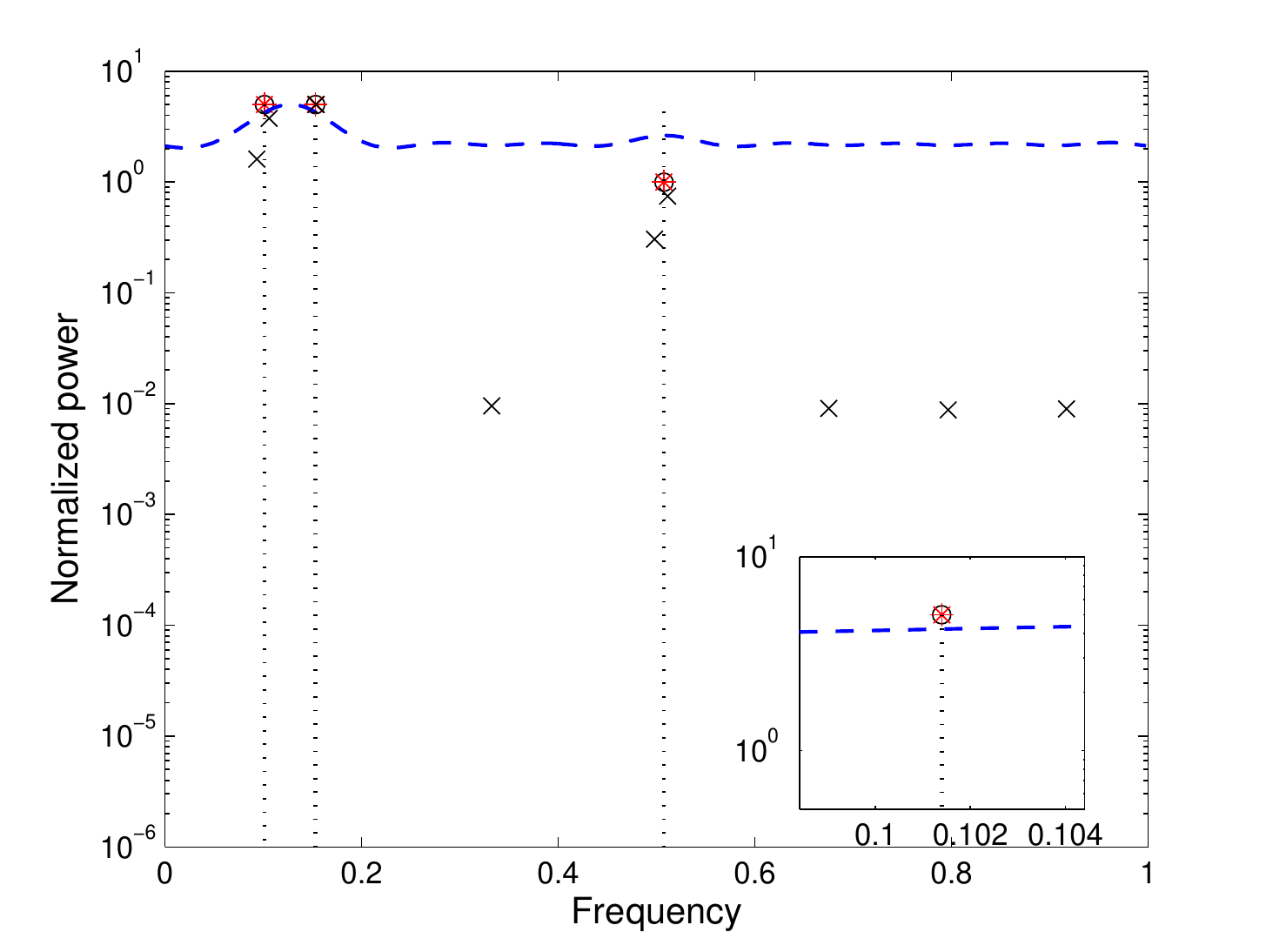}} %
  \subfigure[$\sbra{N, \text{SNR}}=\sbra{1,60\text{dB}}$]{
    \label{Fig:spectra4}
    \includegraphics[width=3.15in]{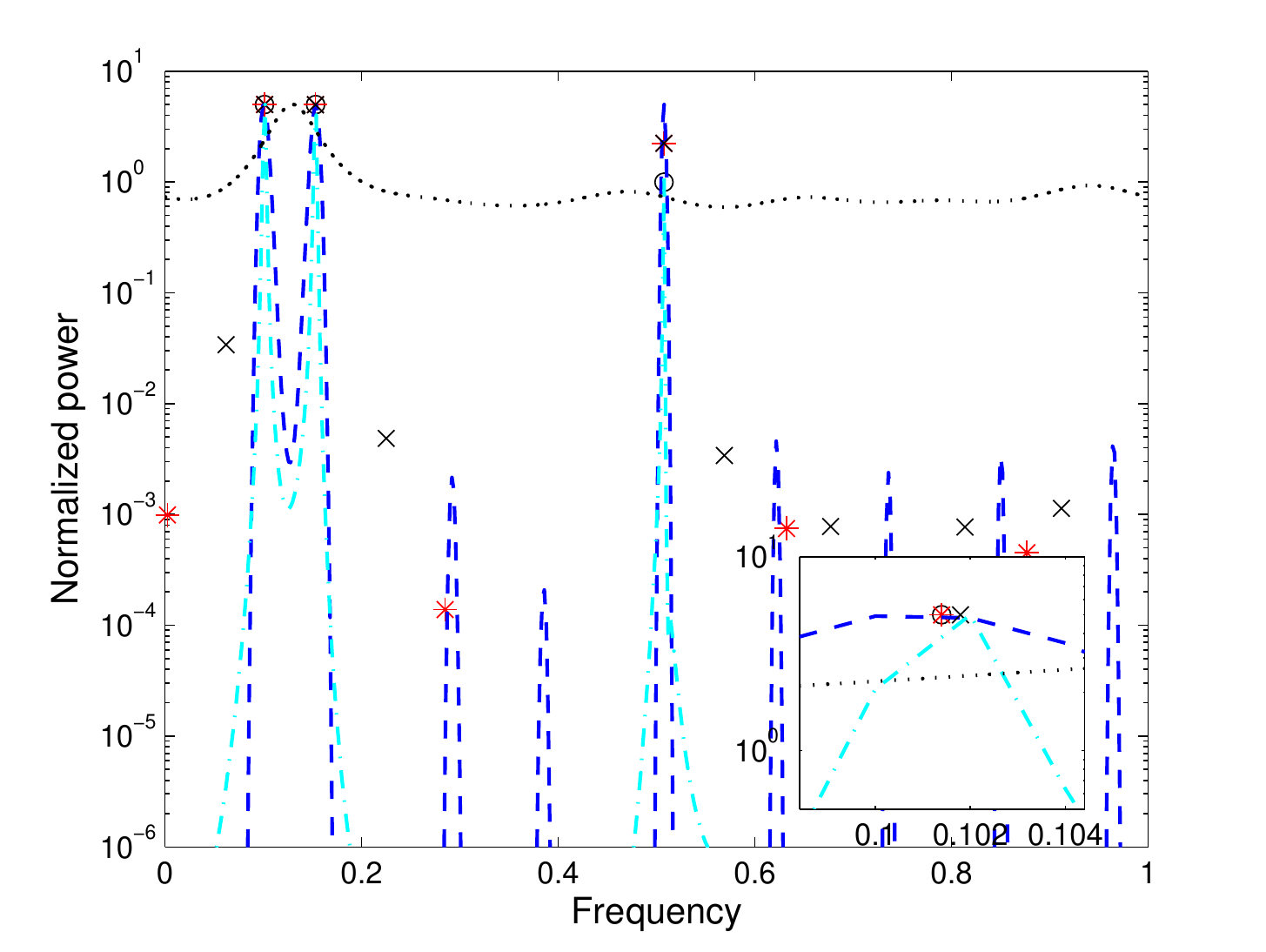}}
\centering
\caption{Spectra of SPA+, SPICE+, SPICE+-PP, IAA and MUSIC for uncorrelated sources with respect to $\sbra{N, \text{SNR}}$. Other settings include: $\m{\theta}^o=\mbra{0.1014, 0.1532, 0.5077}^T$, $\m{p}^o=\mbra{5, 5, 1}^T$ and ULA with $M=10$. The subfigure in the lower right corner zooms in the area around source 1.} \label{Fig:spectra}
\end{figure*}

\begin{figure*}
\centering
  \subfigure[$\sbra{N, \text{SNR}}=\sbra{200,20\text{dB}}$]{
    \label{Fig:spectra_coh1}
    \includegraphics[width=3.15in]{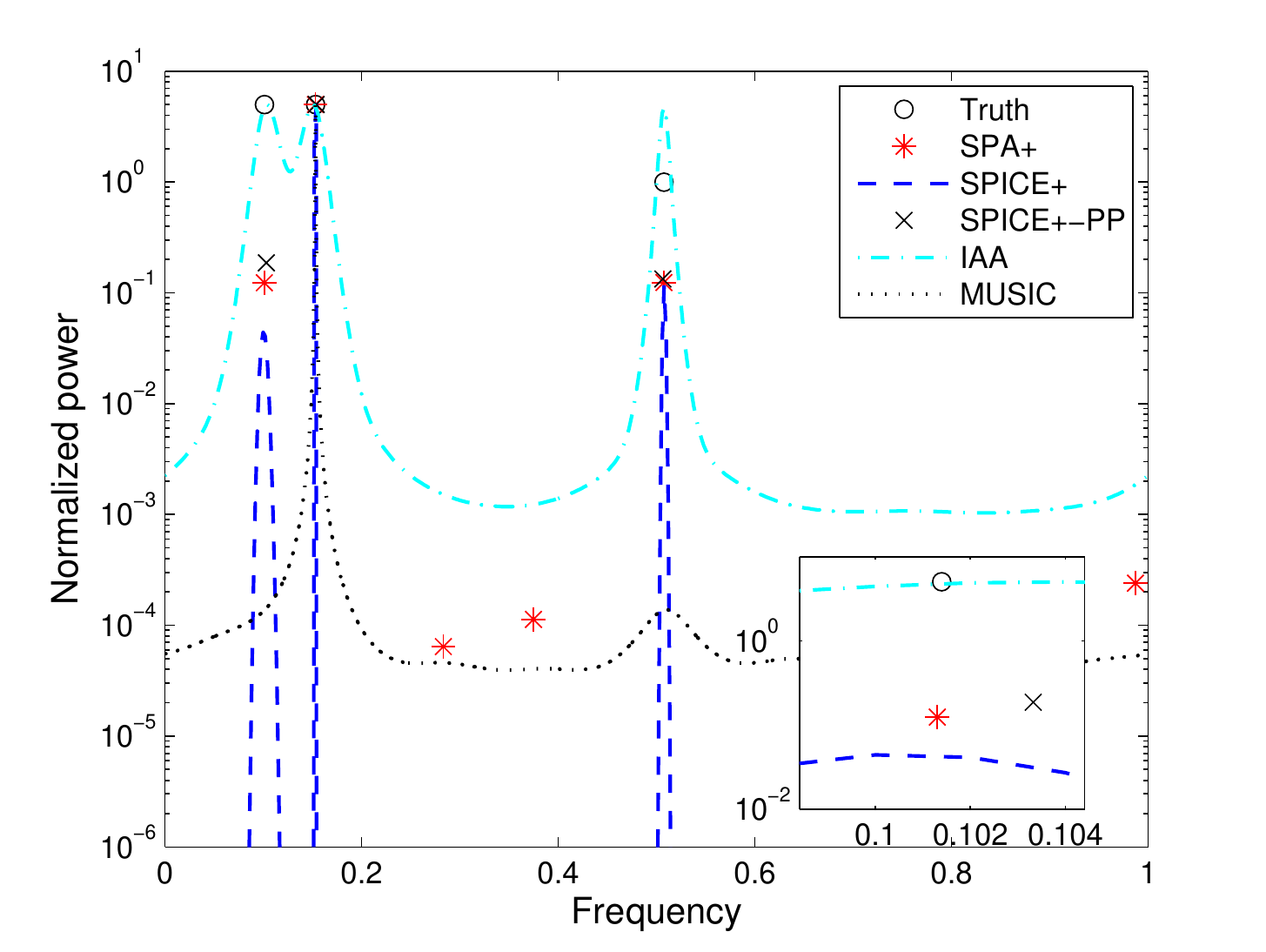}} %
  \subfigure[$\sbra{N, \text{SNR}}=\sbra{200,0\text{dB}}$]{
    \label{Fig:spectra_coh2}
    \includegraphics[width=3.15in]{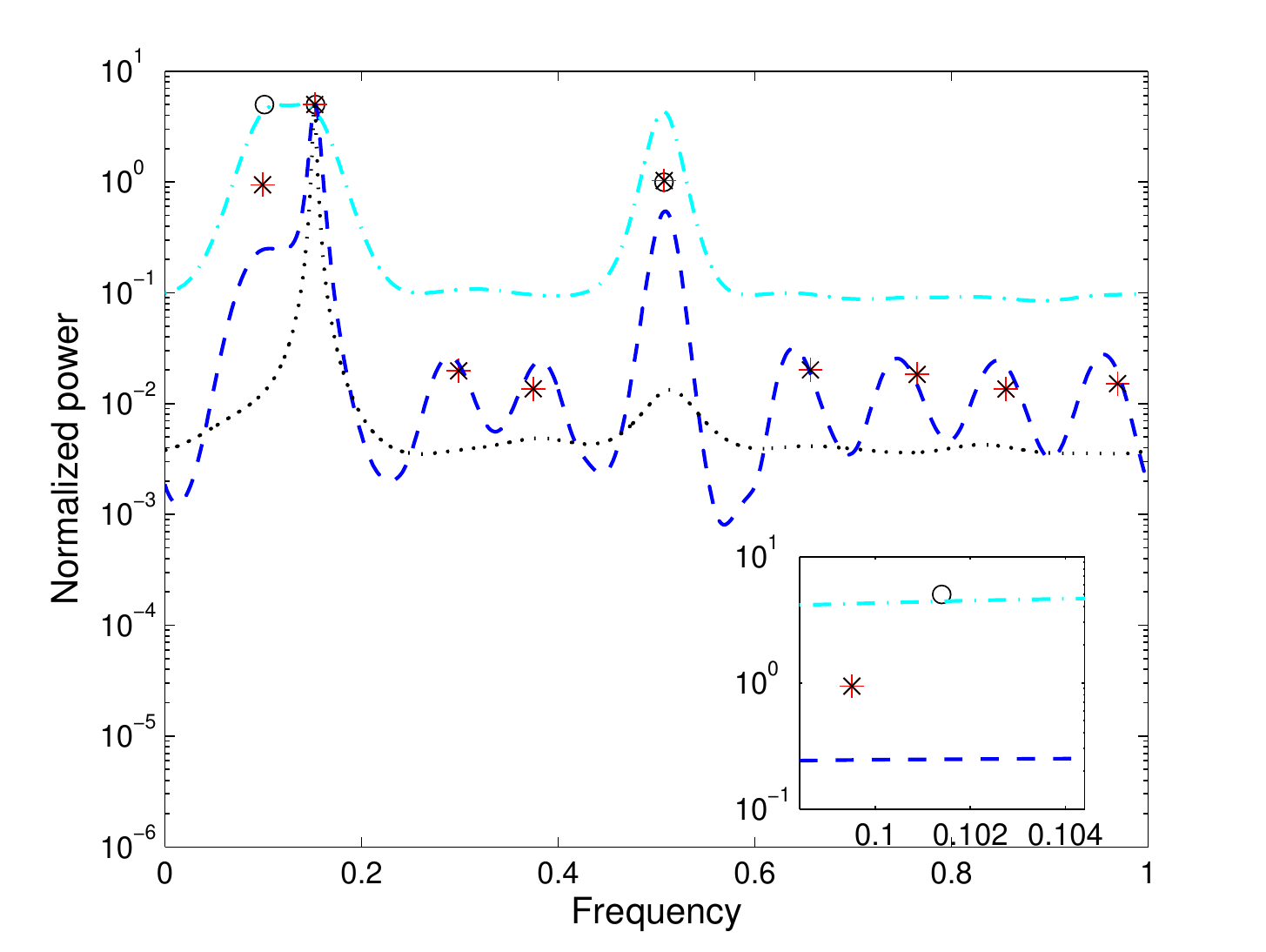}}
\centering
\caption{Spectra of SPA+, SPICE+, SPICE+-PP, IAA and MUSIC for coherent sources with respect to $\sbra{N, \text{SNR}}$, where source 3 is a replica of source 1. Other settings include: $\m{\theta}^o=\mbra{0.1014, 0.1532, 0.5077}^T$, $\m{p}^o=\mbra{5, 5, 1}^T$ and ULA with $M=10$. The subfigure in the lower right corner zooms in the area around source 1.} \label{Fig:spectra_coh}
\end{figure*}

We compare spectra of the aforementioned methods in this subsection. In our simulation, we consider $K=3$ uncorrelated/coherent sources with power $\m{p}^o=\mbra{5, 5, 1}^T$ from directions specified by the frequency vector $\m{\theta}^o=\mbra{0.1014, 0.1532, 0.5077}^T$. A ULA with $M=10$ is used to receive the signals. Each of the source signals is randomly generated with constant amplitude and random phase, which is usually the situation in communications applications \cite{stoica2011spice}. Zero-mean white circular Gaussian noise is added with the noise variance $\sigma^o$. The signal to noise ratio (SNR) is defined as the ratio of the minimum source power to the noise variance (in dB), i.e., $\text{SNR}=10\log_{10}\frac{\min\sbra{p_j^o}}{\sigma^o}$. The grid number is set to 500 for SPICE and IAA.

Our simulation results of uncorrelated sources are presented in Fig. \ref{Fig:spectra} with respect to different settings of $\sbra{N, \text{SNR}}$, where some curves are omitted for better visual effects. In the case of large snapshots, e.g., $N=200$ as shown in Figs. \ref{Fig:spectra1} and \ref{Fig:spectra2}, we observe the following phenomena. IAA produces a dense spectrum and exhibits significant resolution degradation in the moderate/low SNR regime. It cannot separate the first two sources in Fig. \ref{Fig:spectra2}. SPICE+ produces a sparse spectrum which contains only a few spikes in the high SNR regime, e.g., $\text{SNR}=20$dB. However, it produces a dense spectrum due to the identifiability problem as addressed in Subsection \ref{sec:connectionSPICE} in the moderate/low SNR regime, e.g., $\text{SNR}=0$dB, where it returns a zero estimate of the noise variance. Unlike SPICE+, SPICE always produces a dense spectrum whenever the SNR is (its spectra are omitted in Fig. \ref{Fig:spectra}). Without surprise, the SPA method proposed in this paper always produces a sparse spectrum. Small spurious spikes exhibit due to the presence of noise and absence of the knowledge of source number and the noise level. By using the postprocessing technique presented in this paper both SPICE+-PP and SPICE-PP produce sparse spectra, and their differences from SPA+ (or SPA) are caused by the modeling error of SPICE as described in Subsection \ref{sec:connectionSPICE}. Moreover, SPICE+ and SPICE+-PP have the same spectrum in the high SNR regime since the postprocessing does not alter the spectrum when SPICE+ has already produced a sparse spectrum. This point will be revisited later.

The limiting case $N=+\infty$ with a very low SNR ($-20$dB) is studied in Fig. \ref{Fig:spectra3}, where the true covariance matrix is adopted to implement SPA, SPICE and MUSIC. IAA is not considered in this scenario. It is shown that SPA (similarly for MUSIC) can exactly localize the three sources, which verifies the conclusion of Theorem \ref{thm:consistency} that SPA is statistically consistent. SPICE+ has low resolution in the low SNR regime due to the identifiability problem and cannot separate the first two sources. Due to the modeling error of SPICE, SPICE+-PP is not consistent as well. Though this paper is mainly focused on the case of moderate/large snapshots, it is noted that SPA can be applied to the case of small or even a single snapshot as shown in Fig. \ref{Fig:spectra4} where MUSIC fails.

Fig. \ref{Fig:spectra_coh} presents simulation results of coherent sources, where source 3 is exactly a replica of source 1. It is shown that the proposed SPA method has consistently good performance in the presence of complete correlation due to the adopted covariance fitting criterion as shown in \cite{stoica2011new,stoica2011spice}. In contrast, IAA has low resolution as in the case of uncorrelated sources. MUSIC typically misses coherent sources. SPICE tends to miss coherent sources as well in the moderate/low SNR regime. Finally, note that the power estimates of the coherent sources are attenuated in SPA, the reason of which should be investigated in the future.

\subsection{Quantitative Comparisons with SPICE}
Quantitative comparisons will be carried out in this subsection to demonstrate advantages of our discretization-free SPA method compared to SPICE. Readers are referred to \cite{stoica2011spice} for performance comparisons of SPICE with IAA and MUSIC. Unlike SPA, SPICE might produce a dense spectrum as the nonparametric methods. The frequency estimate of SPICE is obtained using the peaks of the spectrum following from \cite{stoica2011spice}. To illustrate effects of the discretization adopted in SPICE, three discretization levels are considered with the grid size $\widetilde{N}=200, 500, 1000$, respectively. For convenience, the SPICE algorithm adopting the three discretization schemes will be referred to as SPICE1, SPICE2 and SPICE3, respectively. The `+' symbol will be used as before. Metrics recorded include mean squared error (MSE) and CPU time usage. The MSE of the frequency estimation is computed as $\frac{1}{K}\twon{\widehat{\m{\theta}}_K-\m{\theta}^o}^2$ and then averaged over a number of Monte Carlo runs, where $\widehat{\m{\theta}}_K$ denotes the frequency estimate which is obtained by keeping the associated largest $K$ entries of the power estimate $\widehat{\m{p}}$. The CRLB is commonly used as a benchmark when evaluating the performance of various estimators though it is a lower bound for only unbiased estimators. Note also that to compute the CRLB requires the knowledge of $K$ which is not used in SPA and SPICE. So, there might exist a gap between the CRLB and the performance of SPA or SPICE that we study. The simulations study both the ULA and SLA cases and are focused on uncorrelated sources.

\subsubsection{The ULA Case} {\em Experiment 1} studies performance variation with respect to the SNR. We consider a ULA with $M=10$. Without loss of generality, $K=2$ uncorrelated sources impinge on the array with (off-grid) frequencies $\frac{1}{6}$ and $\frac{4}{15}$ and unit powers. Notice that each frequency is a third grid interval away from the nearest grid point for SPICE. Since the best frequency estimate for a given source is the nearest grid point, the MSE of the frequency estimation of SPICE is lower bounded by $\frac{1}{9\widetilde{N}^2}$ regardless of the SNR.\footnote{Under the assumption that the source is randomly located in one or more grid intervals, the lower bound will be $\frac{1}{12\widetilde{N}^2}$ \cite{yang2013off}.} The number of snapshots is set to $N=200$ and the SNR varies in $\lbra{-20, -15, \dots, 25}$dB. 200 Monte Carlo runs are used for each algorithm to obtain the metrics, where the source signals and the noise are both i.i.d.~Gaussian. Fig. \ref{Fig:MSE_ULA} plots MSEs of the simulation results, where the two cases of equal and different noise variances are separately presented to provide a better illustration. SPICE has a better performance with a finer discretization but lower bounded by some constant as mentioned above. To the contrary, the MSEs of the SPA methods improve constantly with the SNR and gradually approach the CRLBs. Both the SPA methods and the SPICE methods have similar performance trends in the two cases of equal and different noise variances. However, SPICE-PPs perform differently. In the former case, SPICE+-PPs and SPICE+s coincide in the high SNR regime since they produce the same sparse spectra as shown in Subsection \ref{sec:spectra}. In the latter case, SPICE-PPs outperform the associated SPICEs when the SNR is larger than some threshold thanks to the postprocessing technique presented in this paper. But when the SNR is sufficiently high, the modeling error caused by the discretization dominates the total uncertainties and further performance improvement is impossible. On the other hand, in the moderate/low SNR regime where the measurement noise dominates the uncertainties, SPICE-PP coincides with SPA as expected. Note that the bad performances of SPICE2+-PP and SPICE3+-PP at $\text{SNR}=5$dB are caused by very few outliers (1 and 4 trials, respectively, out of 200), where SPICE+ might not converge within 500 iterations and result in a less accurate $\widehat{\m{R}}$ that is used for parameter estimation. Finally, notice that SPICE can possibly outperform SPA in the SNR range $\mbra{-10, 10}$dB (depending on the discretization level), which will be discussed in Subsection \ref{sec:discussion}.

\begin{figure}
\centering
  \includegraphics[width=3.5in]{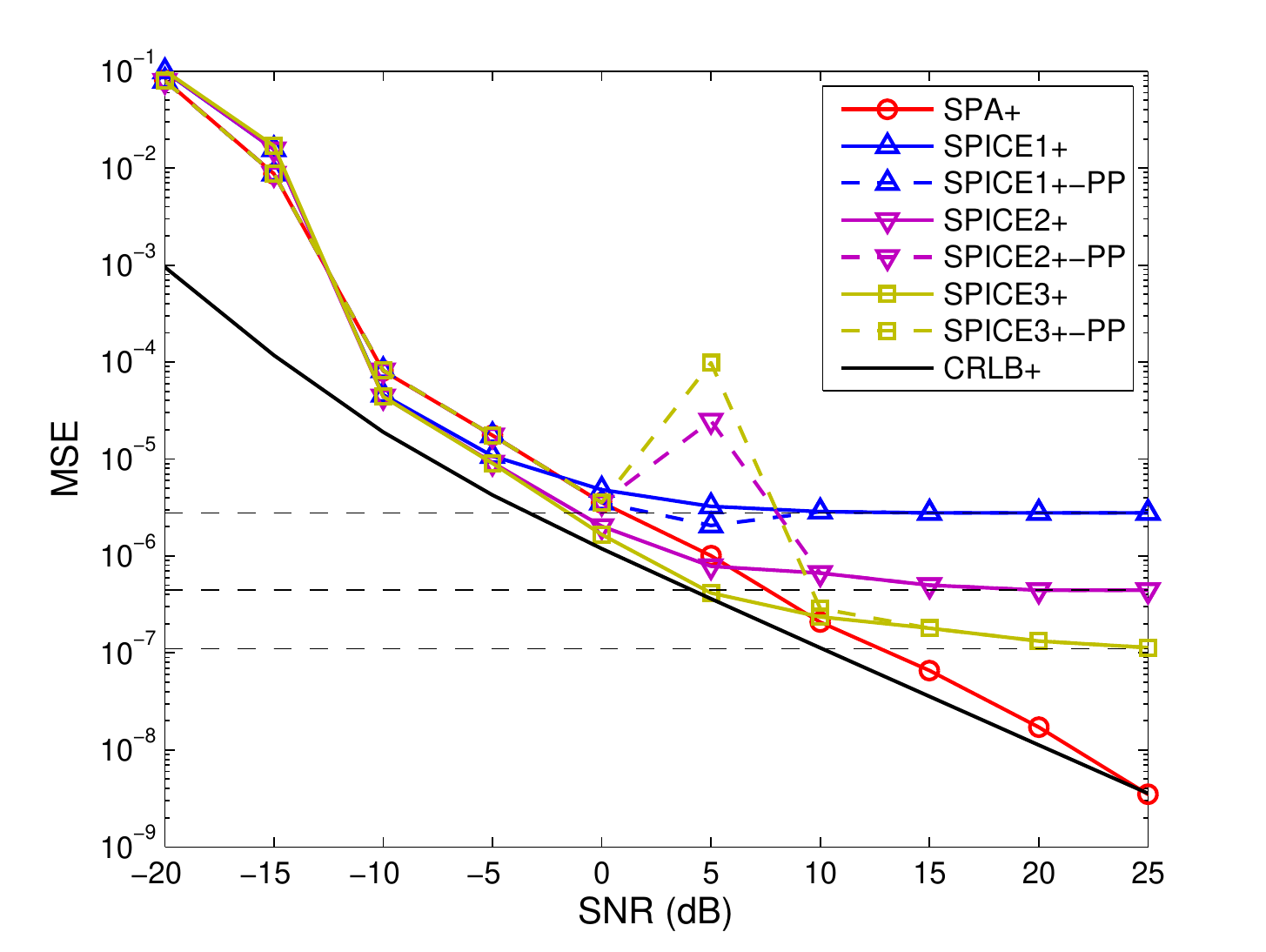}
  \includegraphics[width=3.5in]{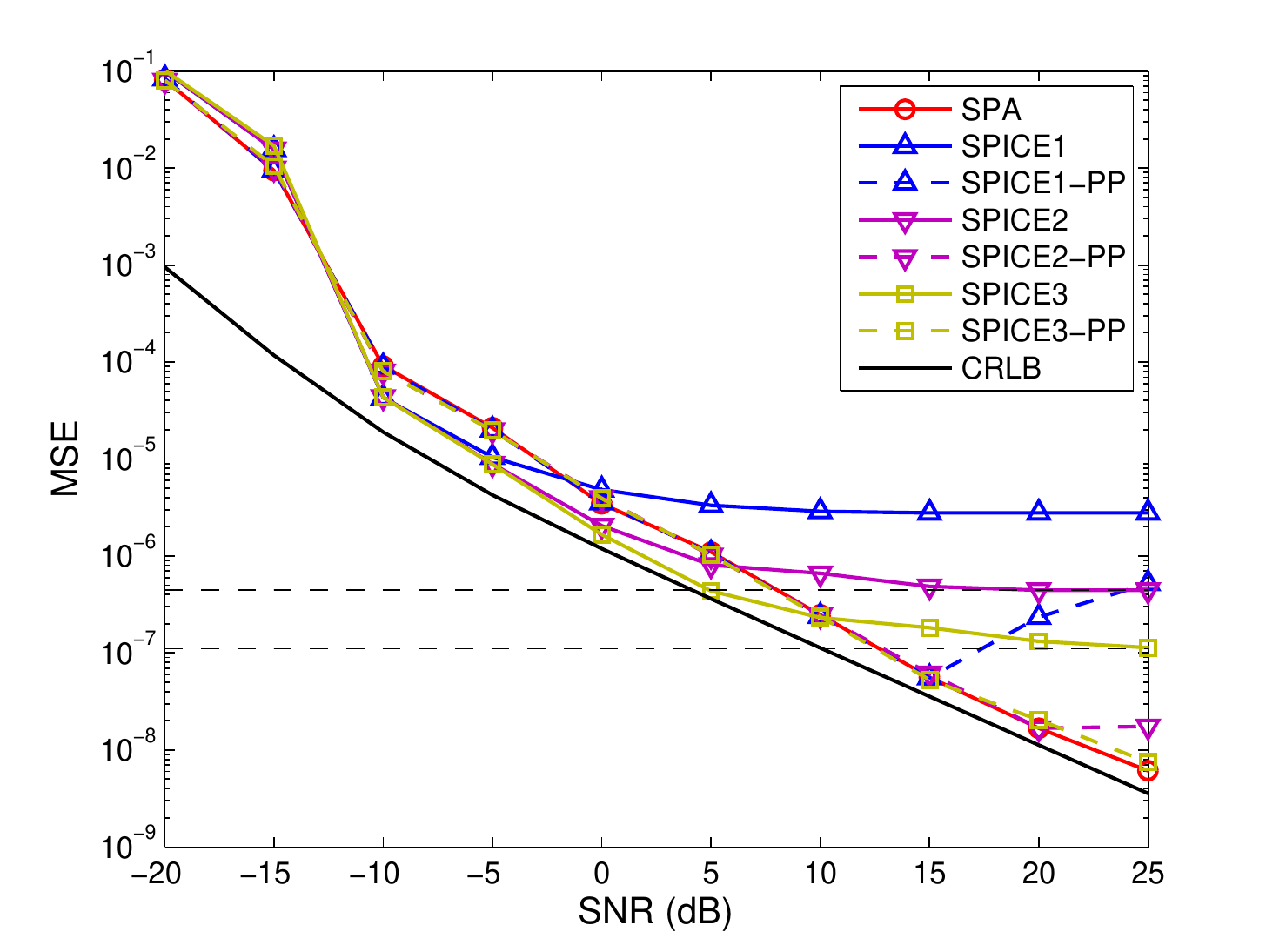}
\centering
\caption{MSEs of frequency estimates of SPA compared with SPICE and the CRLB. Some settings include: ULA with $M=10$, $K=2$ uncorrelated sources with $\m{\theta}^o=\mbra{\frac{1}{6}, \frac{4}{15}}^T$ and $\m{p}^o=\mbra{1,1}^T$, and $N=200$. The horizontal dashed lines are lower bounds of the SPICEs due to the discretization.} \label{Fig:MSE_ULA}
\end{figure}

Fig. \ref{Fig:Time_ULA} presents CPU times of {\em Experiment 1}. Since the postprocessing can be applied efficiently, the time usage of SPICE-PP is slightly longer than SPICE and is omitted. Both SPICE+s and SPICEs have similar performance trends with different discretization levels because the number of iterations used is approximately the same. As a result, the time usage of SPICE is proportional to the grid size $\widetilde{N}$ while the proposed discretization-free SPA does not depend on $\widetilde{N}$. Fig. \ref{Fig:Time_ULA} shows that when $\widetilde{N}=1000$ SPICE+ and SPICE are constantly slower than SPA+ and SPA, respectively. When $\widetilde{N}=500$, SPICE+ is also slower than SPA+ sometimes.

\begin{figure}
\centering
  \includegraphics[width=3.5in]{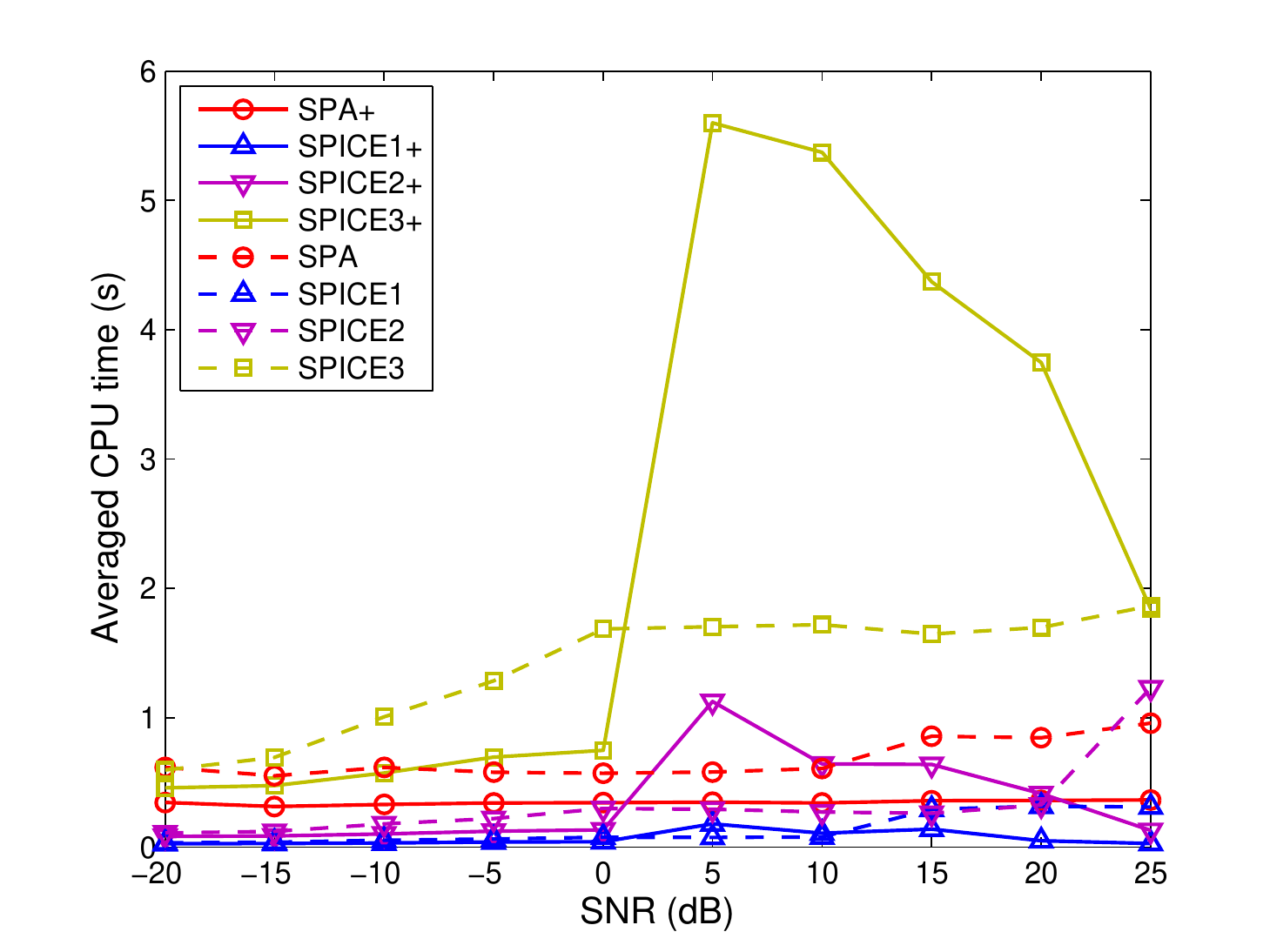}
\centering
\caption{CPU time usages of SPA compared with SPICE for estimating $K=2$ uncorrelated sources. Some settings include: ULA with $M=10$, $\m{\theta}^o=\mbra{\frac{1}{6}, \frac{4}{15}}^T$, $\m{p}^o=\mbra{1,1}^T$ and $N=200$.} \label{Fig:Time_ULA}
\end{figure}

%

{\em Experiment 2} studies the performance with respect to the array length $M$. We repeat {\em Experiment 1} but set $\text{SNR}=10$dB and vary $M$ in $\lbra{5, 10, \dots, 40}$. Moreover, we consider only the case $\widetilde{N}=1000$ for SPICE. Only the case where $M\leq20$ is considered for SPA (excluding SPA+) due to time consideration. The simulation results are presented in Fig. \ref{Fig:ULA_varyM}. Since CRLB+ and CRLB are slightly different and almost undistinguishable, only CRLB+ is plotted. As before, SPICE3+, SPICE3 and SPICE3+-PP share the same lower bound due to discretization and the property of the SPICE+ estimator, while SPA+, SPA and SPICE3-PP can outperform this bound. As $M$ increases, e.g., when $M\geq25$, the performances of SPA and SPICE3-PP hardly improve and the gaps between the algorithms and CRLB+ become larger. Two possible reasons are as follows: 1) Unlike CRLB+, SPA does not use the knowledge of $K$ but $K\leq M-1$, which becomes rougher when $M$ increases and $K$ keeps unaltered, and 2) The modeling error of SPICE increases as $M$ increases. The figure at the bottom indicates that the speed of SPA+ scales well with $M$ in our considered scenario and is faster than SPICE3+ as $M\leq 30$ and constantly faster than SPICE3. The number of iterations of SPICE3+ is empirically observed to decrease with increasing array length in our considered scenario, leading to a seeming strange result that SPICE3+ gets faster when $M$ increases.

\begin{figure}
\centering
  \includegraphics[width=3.5in]{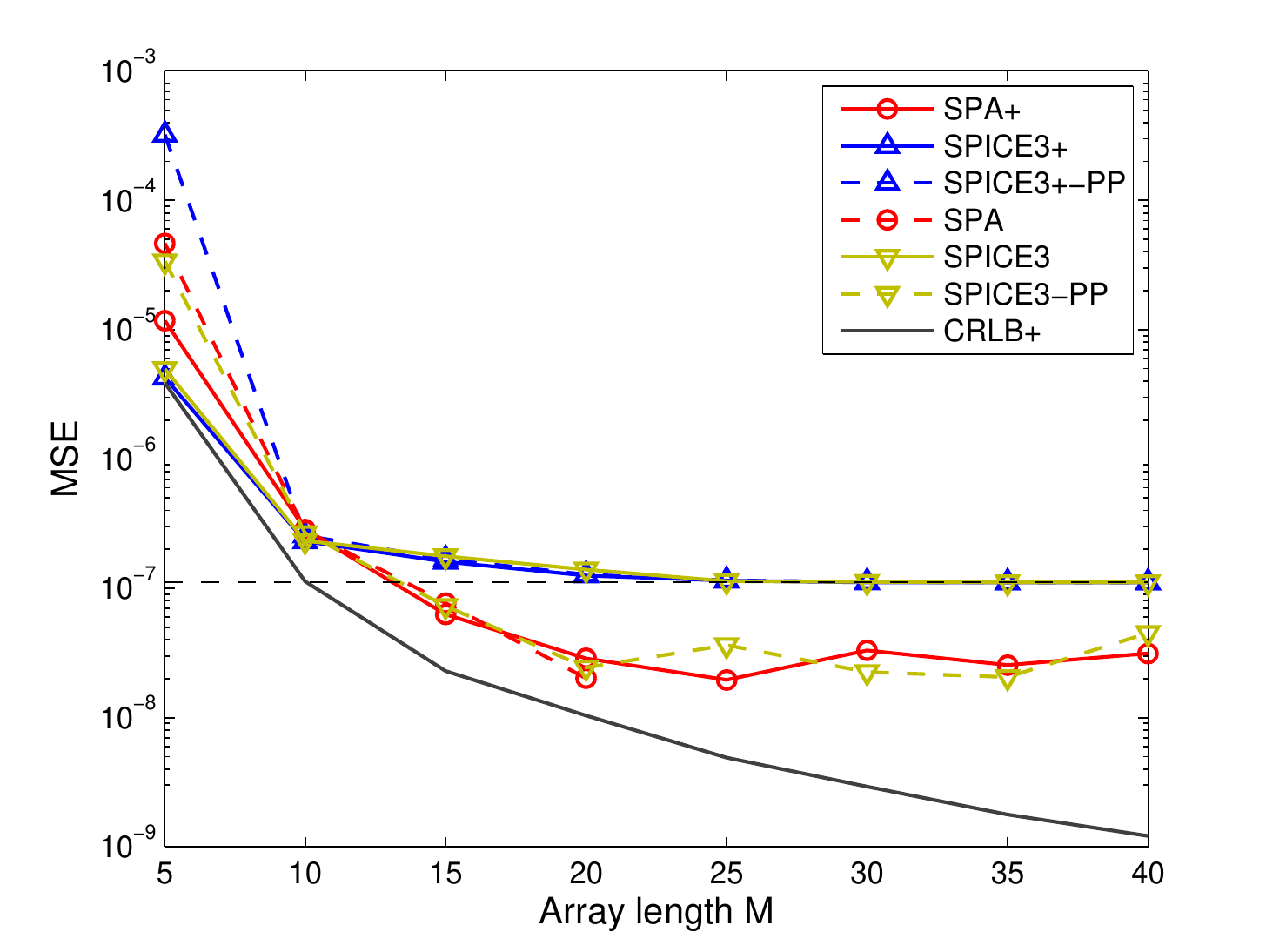}
  \includegraphics[width=3.5in]{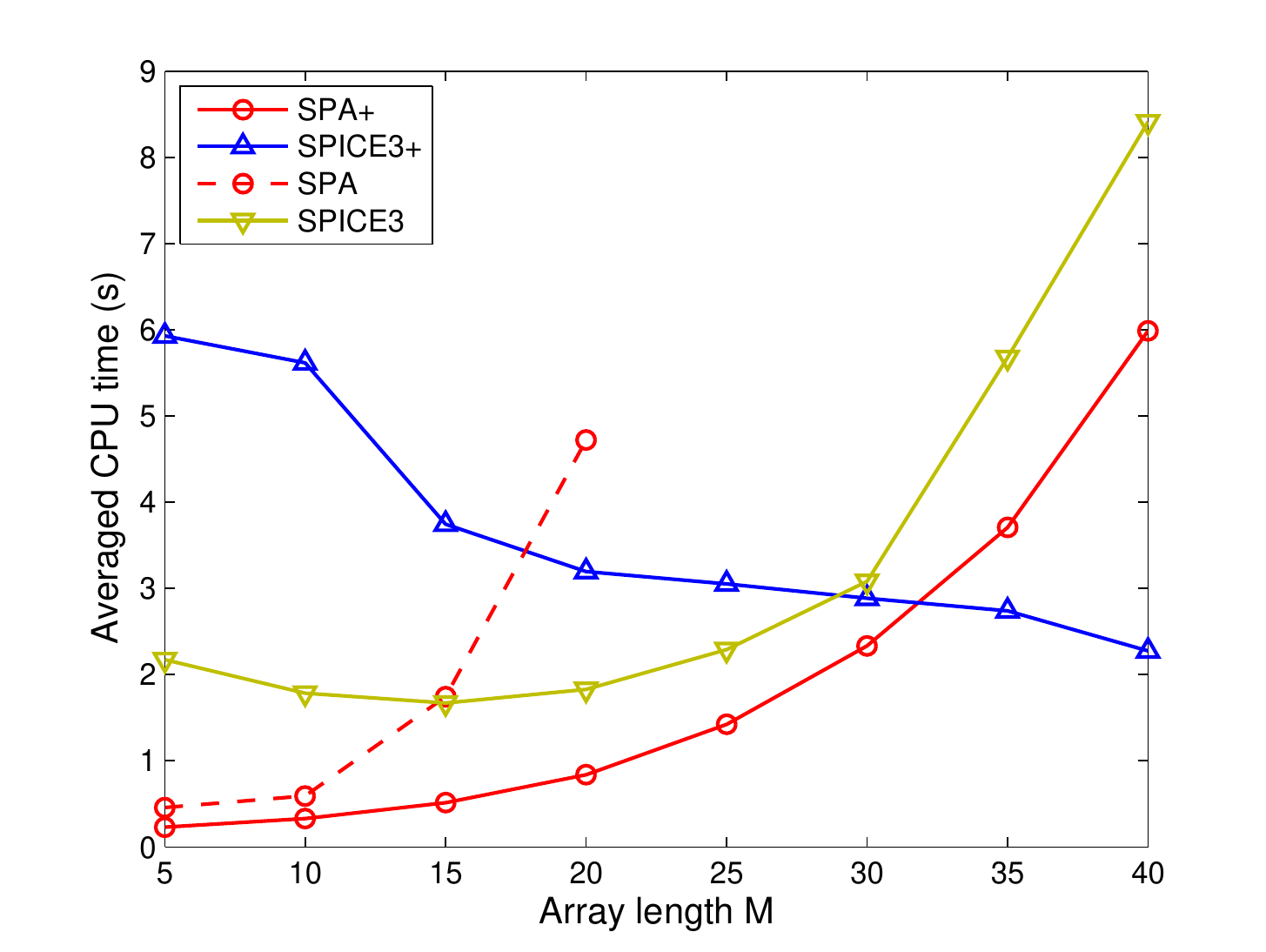}
\centering
\caption{MSEs of frequency estimates (top) and CPU time usages (bottom) of SPA with an $M$-element ULA compared with SPICE. Some settings include: $K=2$ uncorrelated sources with $\m{\theta}^o=\mbra{\frac{1}{6}, \frac{11}{30}}^T$ and $\m{p}^o=\mbra{1,1}^T$, $\text{SNR}=10\text{dB}$ and $N=200$.} \label{Fig:ULA_varyM}
\end{figure}

\subsubsection{The SLA Case} {\em Experiment 3} studies the SLA case where a 4-element redundant array $\m{\Omega}=\lbra{1,2,5,7}$ is considered. We try to verify Theorem \ref{thm:efficiency} and attempt to locate maximally $K=6$ uncorrelated sources with the frequency vector $\m{\theta}^o=\mbra{0.1008,0.1809,0.4001, 0.5509, 0.7006, 0.8501}^T$ and power vector $\m{p}^o=\mbra{2, 2, 2, 1, 1, 1}^T$. Moreover, we set $\text{SNR}=10$dB and vary the number of snapshots $N$ in $\lbra{20,200,2000,20000}$. 1000 Monte Carlo runs are used to obtain the metrics for each $N$. To evaluate the performance, we calculate the ratio $\frac{\text{CRLB+}}{\text{MSE}}$ for each algorithm at each $N$. For an unbiased estimator, the ratio is called its efficiency. The larger the ratio is ($\leq1$ for an unbiased estimator), the more accurate the estimator will be. Table. \ref{table:SLA_frequency} presents the simulation results. Remarkably, SPA+ (or SPA) can outperform the CRLB+ (or CRLB) when $N\geq200$ since the ratios are larger than 1 (or 0.9281). Though SPICE+ (or SPICE) can outperform the CRLB+ (or CRLB) as well at $N=200$, its gap to the CRLB+ (or CRLB) becomes larger as $N$ increases. Moreover, SPICE+-PP (or SPICE-PP) is consistently better than SPICE+ (or SPICE) and worse than SPA+ (or SPA). Notice that as $N$ gets larger, the gap between SPA+ (or SPA) and 1 (or 0.9281) becomes smaller, which is consistent with the conclusions of Theorem \ref{thm:efficiency} that $\frac{\text{CRLB+}}{\text{MSE}_{\text{SPA+}}}, \frac{\text{CRLB}}{\text{MSE}_{\text{SPA}}}\rightarrow1$, as $N\rightarrow+\infty$. It is also noted that the source power and noise variance estimates of the proposed SPA method have similar performances and their metrics are omitted. The frequency and power estimates of SPA+ at $N=200$ are plotted in Fig. \ref{Fig:SLA_N200}.

\begin{table}
 \caption{MSEs of Frequency Estimates of SPA Compared with SPICE and CRLB. Each of the presented values is the ratio $\frac{\text{CRLB+}}{\text{MSE} (\text{or CRLB})}$ (the larger the better).}
 \centering
\begin{tabular}{l|r|r|r|r}
  \hline
   & $N=20$ & $N=200$ & $N=2000$ & $N=20000$\\\hline
   CRLB+ & 1  &  1  &  1  &  1\\\hline
   \textbf{SPA+} & 0.6780  &  \textbf{1.2558}  &  \textbf{1.1327}  &  \textbf{1.0370}\\\hline
   SPICE3+ & 0.4730  &  1.0419  &  0.2286  &  0.0267\\\hline
   SPICE3+-PP & 0.6610  &  1.2528  &  0.6039  &  0.1049\\\hline\hline
   CRLB & 0.9281  &  0.9281 &   0.9281  &  0.9281\\\hline
   \textbf{SPA} & 0.5615  &  \textbf{1.2134}  &  \textbf{1.0671}    &\textbf{1.0168}\\\hline
   SPICE3 & 0.5019  &  0.9535   & 0.2250  &  0.0267\\\hline
   SPICE3-PP & 0.5555   & 1.1693  &  0.5791  &  0.1048\\\hline

\end{tabular} \label{table:SLA_frequency}
\end{table}


\begin{figure}
\centering
  \includegraphics[width=3.5in]{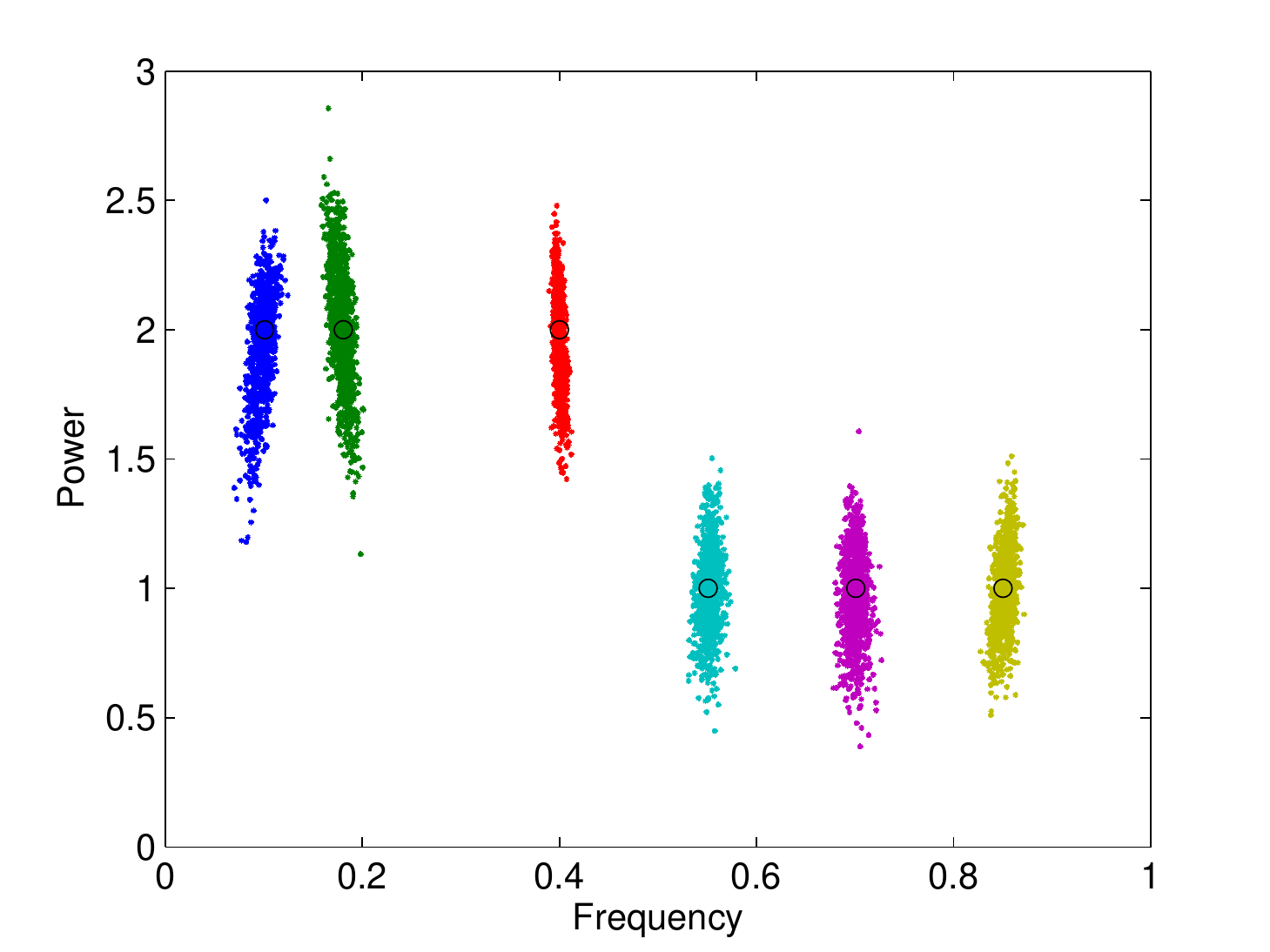}
\centering
\caption{Frequency and power estimates of SPA+ for estimating $K=6$ uncorrelated sources with a 4-element SLA $\m{\Omega}=\lbra{1,2,5,7}$ when $N=200$ and $\text{SNR}=10\text{dB}$ (1000 Monte Carlo runs). Black circles indicate the true frequencies and corresponding powers.} \label{Fig:SLA_N200}
\end{figure}

\subsection{Comparison With OGSBI-SVD}
We compare SPA with the off-grid method OGSBI-SVD in \cite{yang2013off} in this subsection. OGSBI-SVD utilizes the sparse Bayesian learning (SBL) technique \cite{tipping2001sparse,ji2008bayesian}, which mimics the ML estimation according to \cite{stoica2012spice}, and is based on an off-grid observation model which is a first-order approximation of the exact model with continuous frequencies and has a reduced modeling error (recall that the observation model of SPICE and many others is a zeroth-order approximation while SPA relies on the exact model). In OGSBI-SVD the grid offset (the distance from a true frequency location to its nearest grid point) is estimated jointly with the sparse signal and singular value decomposition (SVD) is used for reducing dimension of the observed data and faster convergence. Since the covariance $\m{R}$ is also involved in OGSBI-SVD as in SPICE which cannot be appropriately separated into the source and noise parts, OGSBI-SVD usually produces a dense spectrum like SPICE.

In \emph{Experiment 4}, we repeat {\em Experiment 1} for SPA+ and OGSBI-SVD and consider two discretization levels, $\widetilde{N}=100,200$, for OGSBI-SVD (denoted by `1', `2' respectively). The MSEs are plotted in Fig. \ref{Fig:MSE_OGSBI}. As expected, OGSBI-SVD can exceed the lower bounds of on-grid methods (horizontal dashed lines). As SNR increases, e.g., at $\text{SNR}=25$dB, the modeling error of OGSBI-SVD becomes nonnegligible. Then SPA is more accurate than OGSBI-SVD. Note also that SPA is more robust to noise. For example, SPA can produce satisfactory results at $\text{SNR}=-10$dB while OGSBI-SVD cannot. In computational speed, SPA is slightly slower than OGSBI-SVD with $\widetilde{N}=100$ and about 3 times faster with $\widetilde{N}=200$.

\begin{figure}
\centering
  \includegraphics[width=3.5in]{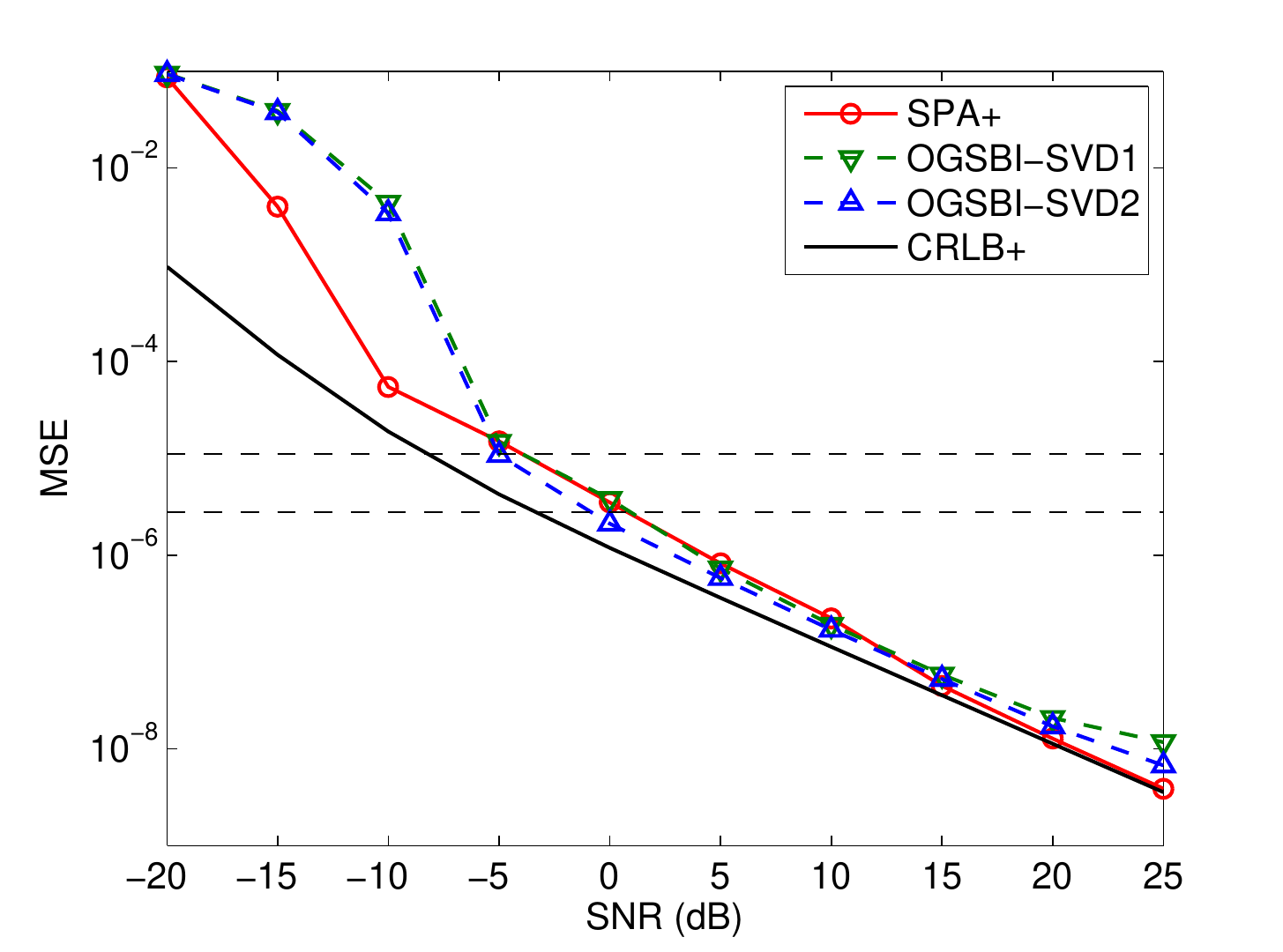}
\centering
\caption{MSEs of frequency estimates of $K=2$ uncorrelated sources using SPA compared with OGSBI-SVD and the CRLB, with $N=200$. The horizontal dashed lines refer to lower bounds of on-grid methods with $\widetilde{N}=100,200$.} \label{Fig:MSE_OGSBI}
\end{figure}

In \emph{Experiment 5}, we repeat the simulation by fixing $\text{SNR}=10\text{dB}$ and varying the number of snapshots $N$ from 1 to 200 at a step of 3. Simulation results presented in Fig. \ref{Fig:MSE_OGSBI_varyN} show that SPA has consistently satisfactory performance though it is slightly worse than OGSBI-SVD (the MSEs of SPA are less than $1.5$ times those of OGSBI-SVD in most of the scenarios). Since OGSBI-SVD mimics the ML estimation and SPA is a large-snapshot realization of the ML, the performance gap between them vanishes as $N$ increases.

\begin{figure}
\centering
  \includegraphics[width=3.5in]{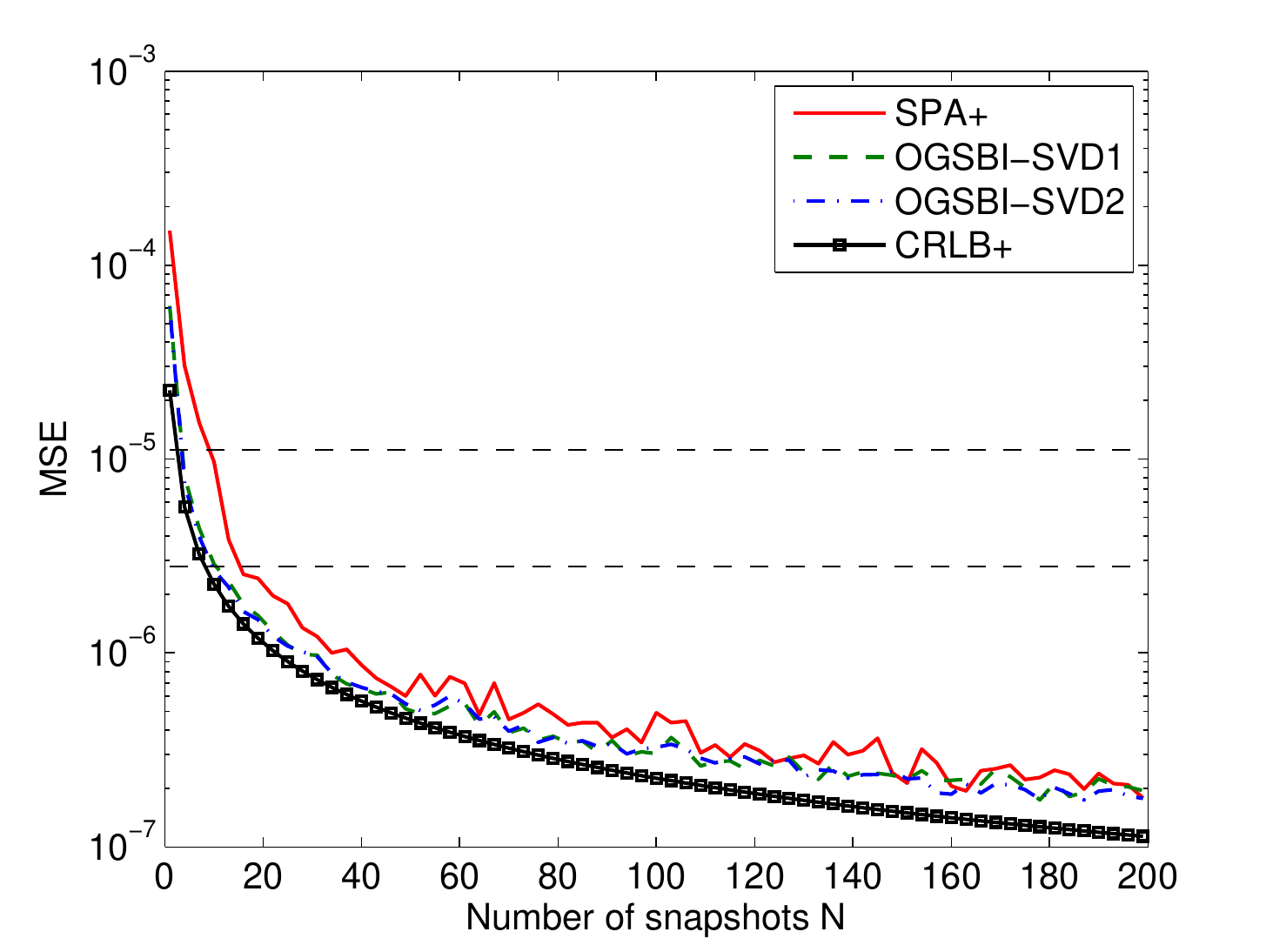}
\centering
\caption{MSEs of frequency estimates of $K=2$ uncorrelated sources using SPA compared with OGSBI-SVD and the CRLB, with $\text{SNR}=10\text{dB}$. The horizontal dashed lines refer to lower bounds of on-grid methods with $\widetilde{N}=100,200$.} \label{Fig:MSE_OGSBI_varyN}
\end{figure}

\subsection{Discussion: Resolution vs. Robustness} \label{sec:discussion}
If an infinitely fine discretization scheme is allowed in SPICE (to eliminate the modeling error), then SPICE and SPA will produce the same covariance matrix estimate $\widehat{\m{R}}$. However, SPA always has a sparse spectrum while SPICE may produce a dense one. The two spectra can be considered as two different decomposition schemes applied to $\widehat{\m{R}}$ to estimate the source powers. The sparse decomposition of SPA has good statistical properties as presented in Subsection \ref{sec:statproperty} while the dense decomposition of SPICE does not. Moreover, as shown in Figs. \ref{Fig:spectra3} and \ref{Fig:spectra_coh2}, the dense decomposition of SPICE has potentially inferior resolution. However, the simulation results of Fig. \ref{Fig:MSE_ULA} show that SPICE can outperform SPA in terms of MSE in the middle range of the SNR (similar results are presented in Figs. \ref{Fig:MSE_OGSBI} and \ref{Fig:MSE_OGSBI_varyN} for OGSBI-SVD). It is because that 1) SPA is empirically observed to produce a more heavy-tailed frequency estimator than SPICE since the frequency of a source is determined by a single point unlike SPICE for which the frequency estimate is given by the peaks of the spectrum,\footnote{The use of peaks of a dense spectrum as the frequency estimates is inspired by spectral-based methods, e.g, MUSIC and beamformer, and has been adopted in many ``sparse'' methods (see, e.g., \cite{stoica2011spice,yang2013off}). However, it conflicts with the principle of sparse methods that any (significant) nonzero power estimate corresponds to a source (ruling out numerical effects). This problem has been neglected in the literature and should be investigated in the future.} and 2) the MSE metric is sensitive to heavy-tailed estimators (in fact, a careful study reveals that SPA+ is more accurate than SPICE3+ in over half of the Monte Carlo runs in the SNR range $\mbra{-10,5}$dB in Fig. \ref{Fig:MSE_ULA}).

\section{Conclusion and Future Work} \label{sec:conclusion}

In this paper, the linear array signal processing problem was studied and a discretization-free technique named as SPA was proposed. The new method adopts the covariance fitting criteria of SPICE and was formulated as an SDP followed by a postprocessing technique. SPA is a parametric method and guarantees to produce a sparse parameter estimate and in the mean time enjoys several other merits. Its asymptotic statistical properties were analyzed and practical performance was demonstrated via simulations compared to existing methods.

The following directions should be investigated in the future, some of which have been mentioned in the main context of this paper: 1) connection of the proposed SPA method to the atomic norm-based discretization-free methods in \cite{candes2013towards,bhaskar2013atomic,tang2012compressed} for spectral analysis, 2) postprocessing techniques for general but not redundancy SLAs to utilize the full information of the range of the source number, 3) performance analysis of SPA in the finite-snapshot case while this paper is mainly on its asymptotic statistical properties, 4) fast implementations of the SPA method via developing more computationally efficient algorithms for solving the SDPs involved, 5) modified SPA methods with automatic source number estimation, and 6) discretization-free methods for array processing with general array geometries.


\end{document}